\newtheorem{thm}{Theorem}
\newtheorem{cor}{Corollary}
\newtheorem{lem}{Lemma}
\newcommand{\script}[1]{{{\cal{#1} }}}
\newcommand{\busy}{\mbox{busy}}
\begin{document}

\title
 {Reversible Models for Wireless Multi-Channel Multiple Access} 
\author{Michael J. Neely \\ University of Southern California
\thanks{This work was supported by grant NSF SpecEES 1824418.}
}

\markboth{Posted to arxiv September 14, 2019}{Neely}

\maketitle

\begin{abstract} 
This paper presents a network layer model for a wireless multiple access system with both persistent and non-persistent users.  There is a single access point with multiple identical channels.    Each user who wants to send a file first scans a subset of the channels to find one that is idle.   
If at least one idle channel is found, the user transmits a file over that channel.   If no idle channel is found,  a persistent user will repeat the access attempt at a later time, while a non-persistent user will  leave.  This is a useful mathematical model for 
``coffee shop'' situations where a group of 
persistent users stay near an access point for an extended period of time while 
non-persistent users come and go.  
 Users have heterogeneous activity behavior, file upload rates, and 
service durations. The system is a complex multi-dimensional Markov chain.  
The steady state probabilities are found by exploiting a latent reversibility 
property.  This enables simple expressions for throughput and blocking 
probability. 
\end{abstract} 

\section{Introduction} 

Consider a wireless system with a single access point that has $m$ identical channels, where $m$ is a positive integer.  
Each channel can support one file transmission, so that up to $m$ files can be transmitted
simultaneously.  Different types of users want to upload files to the access point.  To do this, they first need to find an idle channel.  At the start of every upload attempt, a user randomly scans a subset of the channels in hopes of finding at least one channel that is idle.  Let $s$ be the size of the subset that is scanned and assume that $1\leq s \leq m$. Example numbers are  $m=25$ and $s=5$, so that every user randomly scans 5 of the 25 channels.   If an idle channel is found,
the user sends a single file over that channel (if multiple idle channels are found, the choice of which one to use is made arbitrarily).   If no idle channel is found, the users react differently depending on their type:  Persistent users try again later, while non-persistent users leave and do not return.

 An example of this situation is when the access point is in a fixed location, such as in a coffee shop.   Persistent users are customers who find a table at the coffee shop, stay for an extended period of time, and use their wireless devices during their stay.  
  Non-persistent users either walk past the coffee shop without entering, or enter only for a short time (perhaps to place a take-out order).  The goal of this paper is to establish a Markov chain model for this system and to analyze the model to obtain steady state behavior for throughput and blocking probability.

The system operates in continuous time over the timeline $t \geq 0$.  A channel $j \in \{1, \ldots, m\}$ is said to be \emph{busy} at time $t$ if it is currently being used, that is, if there is a user that is transmitting a file over that channel. Let $B(t) \in \{0,1, \ldots, m\}$ 
be the total number of channels that are busy at time $t$. Suppose a user who is not currently transmitting attempts to access the network at a time $t$ for which $B(t)=b$.  This user scans a random subset of $s$ channels to determine which (if any) are not being used, with all subsets equally likely.  Let $\theta(b)$ denote the conditional
probability that a user successfully finds an idle channel, given that $B(t)=b$.  For example, if $s=1$ then
$\theta(b) = 1 -  b/m$ for all  $b \in \{0, 1, \ldots, m\}$. If $s \in \{1, \ldots, m\}$ then 
\begin{equation} \label{eq:example-success-probability} 
\theta(b) = \left\{ \begin{array}{ll}
1 &\mbox{if $b \in  \{0, 1, \ldots, s-1\}$} \\
1-\left(\frac{b}{m}\right)\left(\frac{b-1}{m-1}\right)\cdots\left(\frac{b-(s-1)}{m-(s-1)}\right)  & \mbox{if $b \in  \{s, \ldots, m\}$} 
\end{array}
\right.
\end{equation} 
An interesting feature of this model is that the probability of successfully finding 
an idle channel at time $t$ depends only on $B(t)$, not on which types of users 
are using each channel.  However, $B(t)$ is not enough to describe the state of the system: The full system state is described by a multi-dimensional Continuous Time Markov Chain (CTMC). 

First, this paper considers the case when all users are non-persistent.  There are $k$ different
classes of non-persistent users. Users for each class arrive according to independent Poisson processes with rates $\lambda_1, \ldots, \lambda_k$.  File sizes are independent and exponentially distributed with rate $\mu_i$ for each class $i \in \{1, \ldots, k\}$.  The system state is  
$X(t) = (X_1(t), \ldots, X_k(t))$,
where $X_i(t)$ is the number of class $i$ users currently transmitting over the network, and 
$B(t) = \sum_{i=1}^k X_i(t)$. The state space grows exponentially with $k$.  Fortunately, the system exhibits a latent \emph{reversible} property. A general theory of reversibility in Markov chains is described in \cite{kelly-reversibility}.   Our CTMC is similar to (but not the same as) the \emph{open migration processes} described in \cite{kelly-reversibility}.   This similarity motivates us to guess a particular product-form steady state distribution. The guess is validated by showing that it satisfies the \emph{detailed balance equations}. This leads to a closed form expression for the steady state mass function in terms of the $\lambda_i$ and $\mu_j$ values.  This yields several insightful results, including an expression for the long term access success probability that depends only on the number of channels $m$, the number of channels scanned $s$, and a 
single \emph{loading parameter} $\rho= \sum_{i=1}^k \lambda_i/\mu_j$.

\begin{figure}[t]
   \centering
   \includegraphics[height=1in]{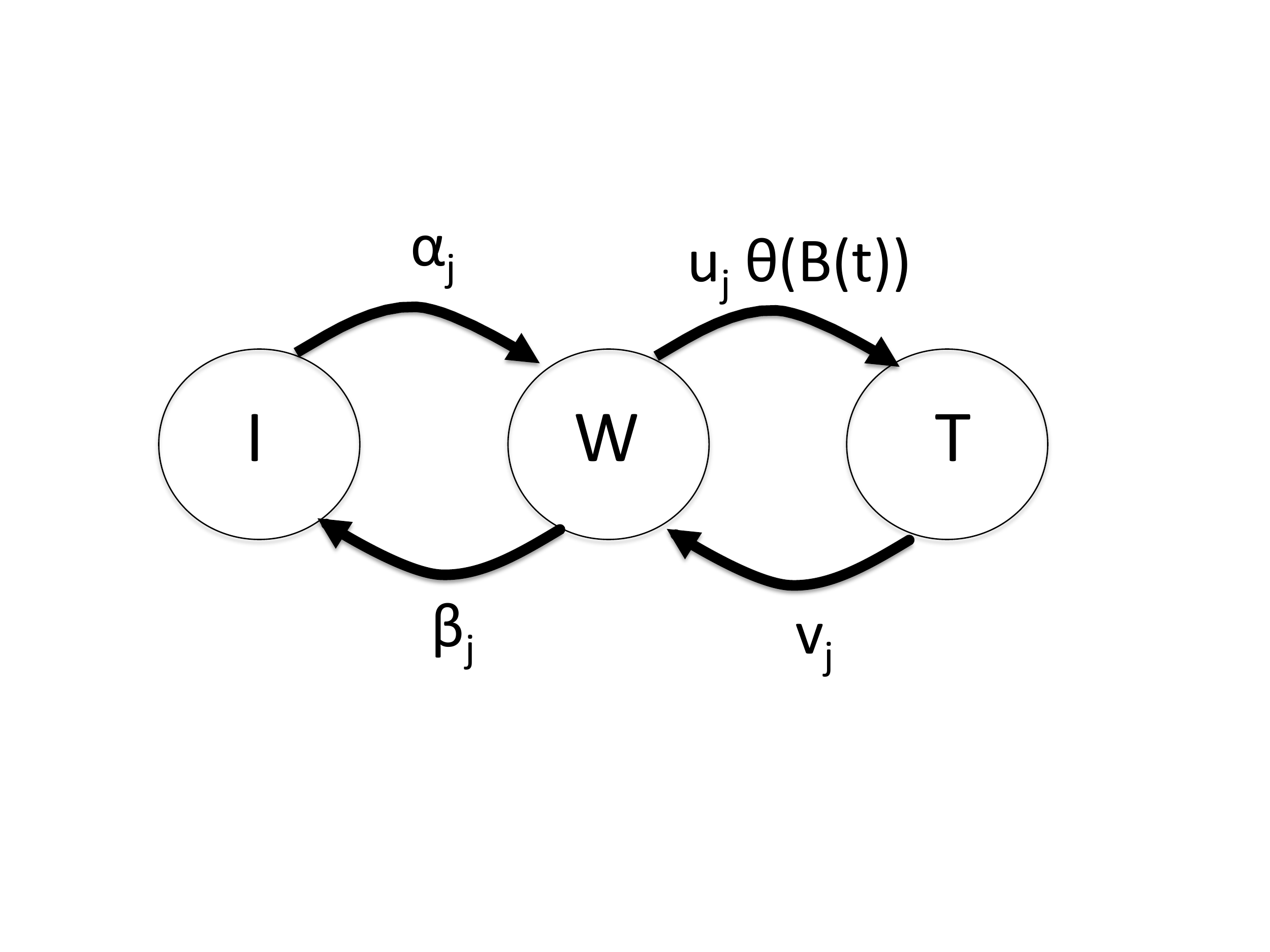} 
   \caption{A state diagram showing the idle, waiting, and transmitting states
   for persistent user $j \in \{1, \ldots, n\}$.}
   \label{fig:3-state}
\end{figure}

Next, this paper considers the scenario with both persistent and non-persistent users. 
As before, $(X_1(t), \ldots, X_k(t))$ is the vector that specifies the number of 
non-persistent user of each type that are currently transmitting.  In addition, 
there are $n$ persistent users, where $n$ is a  
positive integer. 
 Each persistent user $j \in \{1,\ldots, n\}$ can be in one of three \emph{activity states}: Idle, Waiting, Transmitting.  Let $A_j(t) \in \{I, W, T\}$ denote the current activity state of persistent user $j \in \{1, \ldots, n\}$. The dynamics of $A_j(t)$ are described by the 
 3-state diagram of Fig. \ref{fig:3-state}: 
 \begin{itemize} 
 \item Idle ($A_j(t)=I$):  Persistent user $j$ is not using its wireless device and thus
 has no files to send. This user stays in the idle state for an independent and exponentially distributed time with parameter $\alpha_j$.  
 \item Waiting ($A_j(t)=W$): Persistent user $j$ is waiting to either attempt transmission of a new file, or go to the idle state, according to independent racing exponential variables of parameters $\beta_j$ and $u_j$.   If the user attempts a transmission but fails, it remains in the waiting state. 
 \item Transmitting ($A_j(t)=T$): Persistent user $j$ is currently transmitting a file over a channel.  This user 
 stays in the transmitting state for an independent and exponentially distributed time with parameter $v_j$.
  \end{itemize} 
  The system state is $W(t)=(X_1(t), \ldots, X_k(t); A_1(t), \ldots, A_n(t))$. The 
  total number of busy channels is 
  $$B(t) = \sum_{i=1}^k X_i(t)  + \sum_{j=1}^n 1_{\{A_j(t)=T\}}$$ 
  where 
 $1_{\{A_j(t)=T\}}$ is an indicator function that is $1$ if $A_j(t)=T$ and $0$ else.  A key aspect of Fig. \ref{fig:3-state} is that the transitions $W \rightarrow T$ for persistent user $j$ 
 occur with transition rate $u_j \theta(B(t))$, which multiplies the access attempt rate $u_j$ by the current success probability $\theta(B(t))$.  In particular, this transition rate 
 depends on the current number of  busy channels, which depends on the history of events associated with \emph{all users}.  Thus, the 3-state ``mini-chain'' of Fig. \ref{fig:3-state} 
 for each persistent user $j \in \{1, \ldots, n\}$ 
 is only a partial view of the larger CTMC: These $n$ different  
 ``mini-chains'' are coupled in a nontrivial way.   Using insight obtained from the 
 case when all users are non-persistent, we guess that the system is reversible, guess a particular steady state structure, and verify these guesses by showing that the detailed balance equations hold.  The resulting steady state probabilities for this case have a simple product form solution.  Unfortunately, 
 there are exponentially many states (the number of states can easily be larger than the current estimate on the number of atoms in the universe) 
 and it is not obvious how to sum the joint probabilities to obtain individual user performance.   We provide a polynomial-time method for computing the exact sums by using a discrete Fourier transform. 
 
 \subsection{Related work} 
 
Our system model is similar to recent work in  \cite{lei-mean-field-mac} that also treats multi-channel systems that scan a subset of the $m$ available 
 channels before transmission.  The work in \cite{lei-mean-field-mac} also assumes that users are in one of three states 
 (idle, probing, transmitting), which is similar to our 3-state persistent user structure.  The work in  \cite{lei-mean-field-mac} does not solve the resulting steady state probabilities, rather, it develops mean-field results that are asymptotically accurate  when all users have identical parameters and when the network size scales to infinity.   In contrast, our work   provides the exact steady state values for the continuous time Markov chain for any system size and for heterogeneous user parameters.  It also treats the case when both persistent and non-persistent users are present.  It should be emphasized that our work exploits a latent reversibility property that does not exist in the model of \cite{lei-mean-field-mac}.  In particular the 3-state user dynamics of \cite{lei-mean-field-mac} can roughly be viewed as similar to those of Fig. \ref{fig:3-state} with the exception that there is no $W \rightarrow I$ transition, and the $T\rightarrow W$ transition is replaced with a $T \rightarrow I$ transition.  Intuitively it is clear that if it is possible to have a transition $I\rightarrow W$ but impossible to have a transition in the opposite direction, then reversibility fails.  It is not clear if exact steady state behavior can be obtained when reversibility fails; that remains an important open question and mean field analysis is an important technique for those situations. 
 
 Our model of persistent users accounts for heterogeneous \emph{human user activity}, 
 where users can be in various states depending on their activity patterns. 
 The topic of mathematical models for human-based activity patterns for wireless communication 
 is of recent interest. 
For example,  related Markov-based models of human user activity and human response times 
are treated in \cite{atilla-user-behavior} for wireless scheduling; related 2-state user activity
models are used in  \cite{xiaohan-file-download-ton} to treat file downloading as a constrained restless 
bandit problem.

\section{Non-persistent users}  \label{section:non-persistent}

This section considers the case where all users are non-persistent.  Each user arrives once and makes one attempt to access a channel.   If the access is successful then the user transmits its file, else it leaves and does not return.  This is a useful model for highly mobile wireless systems that pass by an access point for a short time.  In the coffee shop example, these are users that either walk past the coffee shop but do not enter, or enter the shop and stand in line for a take-out order but do not stay for long. If they cannot obtain access after one attempt, they do not try again.
 
Assume there are $k$ classes of such users, where $k$ is a positive integer.  Users from each class $i \in \{1, ..., k\}$ arrive according to independent Poisson processes with rates $\lambda_1, \ldots, \lambda_k$.   Each user has one file to send.  File service times are independent. Files from class $i$ users have service times that 
are exponentially distributed with parameter $\mu_i$.  Assume that $\lambda_i>0$ and $\mu_i>0$ for all $i \in \{1, \ldots, k\}$. 
The different classes can be used to represent different communities of users who may have different arrival rate and file size parameters.

\subsection{Markov chain model} 

The system can be modeled as a continuous time Markov chain (CTMC) with vector 
state $X(t)=(X_1(t), ..., X_k(t))$, where $X_i(t)$ is the number of type $i$ files currently transmitting at time $t$. The state space $\mathcal{S}$ is given by the set of all vectors $(x_1, ..., x_k)$ that have nonnegative integer components such that $\sum_{i=1}^k x_i \leq m$, where $m$ is the number of channels (assume $m$ is a positive integer).  Let $B(t) = \sum_{i=1}^n X_i(t)$ be the number of busy channels.   A user that arrives to the system scans a subset of the channels to find one that is idle. 
For each $b \in  \{0, 1, ..., m\}$ define $\theta(b)$ as the conditional probability that a newly arriving 
user finds an available channel, given that $B(t)=b$. The value of $\theta(b)$ associated with finding at least one idle channel in a system with $s$ busy channels and $m$ total channels is given in \eqref{eq:example-success-probability}. 
We shall call $\theta(b)$ the \emph{conditional success probability function}.  Our mathematical analysis does not require $\theta(b)$ to have the form \eqref{eq:example-success-probability} and allows for more general success probability functions.  We assume only that $\theta(b)$ satisfies the following basic properties: 
\begin{align}
&0 \leq \theta(b) \leq 1 \quad \forall b \in  \{0, 1, 2, \ldots, m\} \label{eq:sp-1}  \\
&\theta(b) > 0 \quad \forall b \in \{0, 1, 2, \ldots, m-1\} \label{eq:sp-2} \\
&\theta(m)=0 \label{eq:sp-3} 
\end{align}
Requirement \eqref{eq:sp-1} ensures $\theta(b)$ is a valid probability for each $b \in  \{0, 1, 2, \ldots, m\}$; 
requirement \eqref{eq:sp-2} ensures that it is possible to utilize all $m$ channels simultaneously (for example, if this were violated by having  $\theta(5)=0$ but $m=10$, then a system that is initially empty could never have more than 5 active channels, which under-utilizes the existing 10 channels);  requirement \eqref{eq:sp-3} enforces the physical constraint that the system cannot support more than $m$ active channels simultaneously. The particular success probability function in \eqref{eq:example-success-probability} indeed satisfies \eqref{eq:sp-1}-\eqref{eq:sp-3}.

To completely describe the Markov chain structure of this system, it remains to specify the transition rates. 
The transition rates $q_{w,z}$ between two states $w=(x_1, ..., x_k)$ and $z=(y_1, ..., y_k)$ are as follows: Fix an integer $j \in \{1, ..., k\}$ and define $e_j = (0, 0, ..., 0, 1, 0, ..., 0)$ as the vector that is $1$ in entry $j$ and zero in all other entries.  Let $x=(x_1, ..., x_k)$ and $x + e_j = (x_1, ..., x_j+1, .., x_k)$ be two states in the state space $\mathcal{S}$.  Then
\begin{itemize} 
\item Transition rate $x \rightarrow x+e_j$ is given by 
$$q_{x, x+e_j} = \lambda_j\theta\left(\sum_{i=1}^kx_i\right)$$ 
This is the product of the arrival rate $\lambda_j$ with the success probability given that the new 
user scans when the system state is $x=(x_1, ..., x_k)$. 
\item Transition rate $x+e_j \rightarrow x$ is given by 
$$q_{x+e_j, x} = (x_j+1) \mu_j$$ 
This is because there are currently $(x_j+1)$ jobs of type $j$ that are actively using channels, and each has an exponential service rate equal to $\mu_j$. 
\end{itemize} 

Since the system state can change by at most one at any instant of time, there are no other types of transitions and so $q_{w,z}=0$ for states $w, z \in \mathcal{S}$ that do not have the above form. It is not difficult to see that the Markov chain is \emph{irreducible}, so that it is possible to get from any state of the state space $\script{S}$ to any
other state in $\script{S}$  (the requirement \eqref{eq:sp-2} and the fact that $\lambda_i>0$ for all $i \in \{1, \ldots, k\}$  ensure this). 

\subsection{Basic Markov chain theory} 

This subsection recalls basic Markov chain theory (see, 
for example, \cite{kelly-reversibility}\cite{gallager}\cite{bertsekas-data-nets}). 
Consider a continuous time Markov chain (CTMC) with a finite or countably infinite state space 
$\script{S}$ and transition rates $q_{w,z}\geq 0$ for all $w,z \in \script{S}$.   Assume that $q_{w,w}=0$ for all $w \in \script{S}$. The states of $\script{S}$ can be viewed as nodes of a graph; the links of the graph are defined by state-pairs $(w,z)$ such that $q_{w,z}>0$; the CTMC is said to be \emph{irreducible} if this graph has a path from every node to every other node.  
A \emph{probability mass function} over the state space $\script{S}$ is a vector 
$(p(w))_{w \in \script{S}}$ that satisfies $p(w)\geq 0$ for all $w \in \script{S}$ and $\sum_{w \in \script{S}} p(w)=1$. 
The goal is to find a mass function that satisfies the following \emph{global balance equations}: 
\begin{equation} \label{eq:GBE} 
p(w) \sum_{z \in \script{S}} q_{w,z}  = \sum_{z \in \script{S}} p(z) q_{z,w} \quad \forall w \in \script{S} 
\end{equation} 
It is well known that if the CTMC is irreducible, then there is at most one probability mass function
$(p(w))_{w \in \script{S}}$ that solves \eqref{eq:GBE}.  If such a mass function exists, then 
it is the unique steady state mass function for the CTMC. If the CTMC is irreducible and has a finite state space, then such a steady state solution always exists.

An irreducible CTMC is said to be 
\emph{reversible} if there exists a probability mass function $(p(w))_{w \in \script{S}}$ 
that satisfies the following \emph{detailed balance equations}:
\begin{equation} \label{eq:detail-general} 
p(w)q_{w,z} = p(z)q_{z,w} \quad \forall w, z \in \mathcal{S}
\end{equation} 
It is well known that if a probability mass function $(p(w))_{w \in \script{S}}$ solves the detailed balance equations, then it also satisfies the global balance equations and hence is the unique state state.  Indeed, if \eqref{eq:detail-general} holds then for each $w \in \script{S}$ we can sum \eqref{eq:detail-general} over all $z$ to obtain:  
$$  \sum_{z \in \script{S}} p(w)q_{w,z} = \sum_{z \in \script{S}} p(z)q_{z,w}  $$
and thus \eqref{eq:GBE} holds. 
   However, not all 
CTMCs are reversible. That is, not all CTMCs have steady states that satisfy \eqref{eq:detail-general}. 

\subsection{Steady state analysis for non-persistent users}

 It is not obvious whether or not the Markov chain for our system of non-persistent users 
 is reversible.  Fortunately, the system is similar to an \emph{open migration process} with reversibility properties as 
described in \cite{kelly-reversibility}.   An open migration process is a system with $k$ \emph{colonies} that can be described by a Markov chain of the type $(X_1(t), ..., X_k(t))$, where $X_i(t)$ is the current population of colony $i$, transitions between states occur when a single member of colony $i$ moves to colony $j$, and transition rates for such events depend only on the current population $X_i$.  Such a migration process can \emph{almost} be used to model the multi-access system of interest, where the number of type $i$ jobs currently using channels can intuitively be viewed as the population of ``colony $i$.''  However, 
the multi-access system is not a migration system because transition structure is different and 
transition rates depend on the \emph{sum} population $x_1 + ... + x_k$. Nevertheless, reversibility properties of the current system can be established.  To this end, define 
\begin{align*}
\rho_i &= \lambda_i/\mu_j \quad \forall i \in \{1, ..., k\} \\
\rho &= \sum_{i=1}^k \rho_i 
\end{align*}

\begin{figure}[t]
   \centering
   \includegraphics[height=1in]{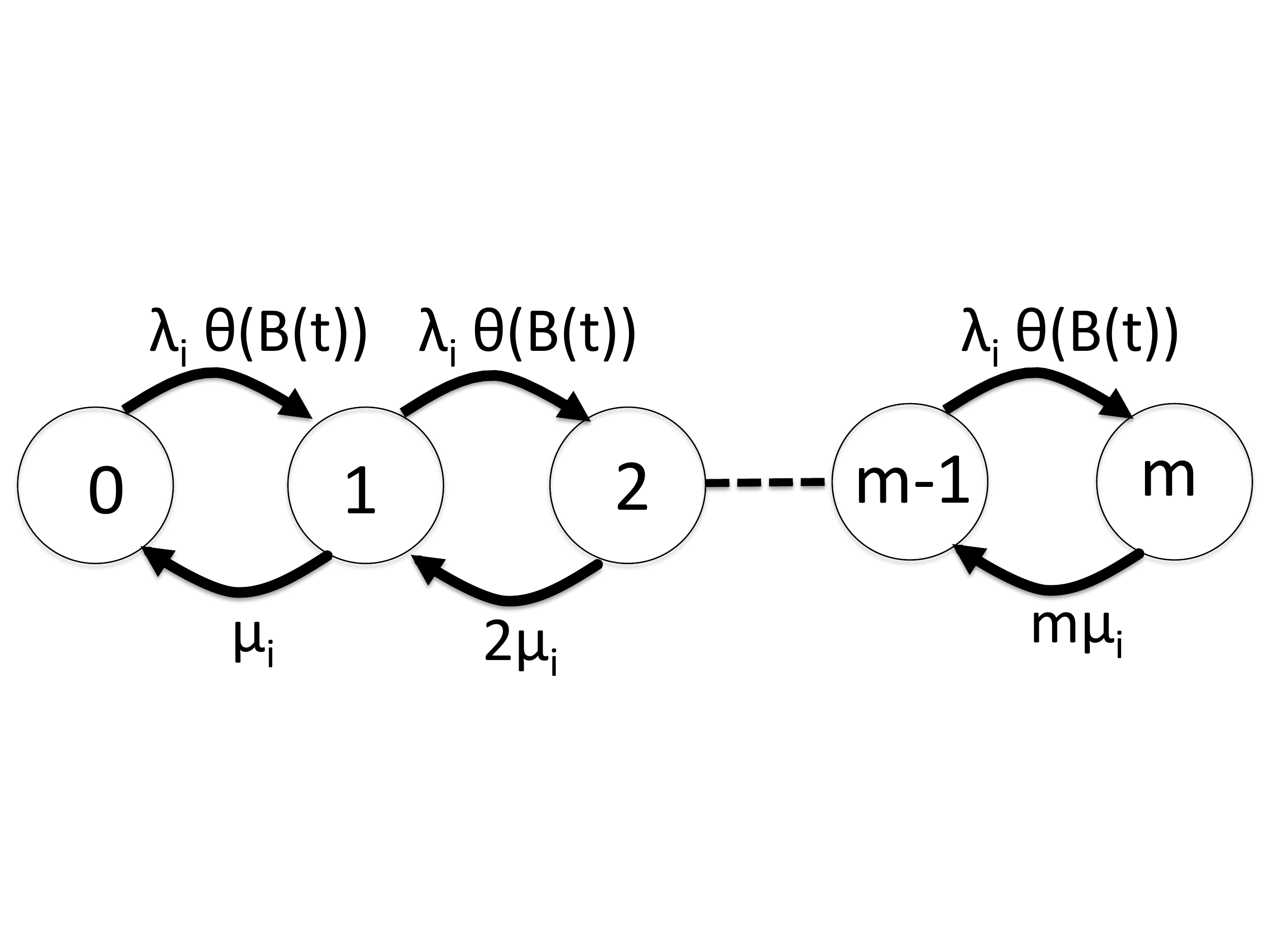} 
   \caption{A birth-death-like state diagram of a single user class $i \in \{1, \ldots, k\}$.}
   \label{fig:birth-death-like}
\end{figure}

The technique behind the next theorem is to \emph{guess} a probability mass function and then show the guess satisfies \eqref{eq:detail-general}.  The structure of the guess is not obvious. However, to gain intuition, note that we constructed our guess for the probability of state $(x_1, \ldots, x_k)$
by observing the ``birth-death-like'' structure of the system in Fig. \ref{fig:birth-death-like} and guessing that steady state is a product of terms that include factors of the type $\rho_i^{x_i}/x_i!$ (which are also factors in the steady state mass function of a 
1-dimensional $M/M/\infty$ queue) as well as factors that multiply the chain of  success probabilities 
$\theta(r)$ over all $r \in \{0, \ldots, x_1+...+x_k-1\}$.  Once a good guess is made, it is not difficult to verify the guess satisfies the detailed balance equations.

\begin{thm} \label{thm:1} Under this non-persistent user model with any success probability function $\theta(b)$ that 
satisfies \eqref{eq:sp-1}-\eqref{eq:sp-3}  we have 

a) The CTMC is reversible and the unique steady state distribution is  
\begin{equation} 
p(x_1, ..., x_k) = A \cdot \left(\prod_{r=0}^{-1+x_1+...+x_k}\theta(r)\right)\prod_{i=1}^k \frac{\rho_i^{x_i}}{x_i!} \quad , \forall (x_1, ..., x_k) \in \mathcal{S} \label{eq:distribution1}
\end{equation} 
where 
$A$ is the positive constant that makes the probabilities sum to 1, and we use the convention that $\prod_{r=0}^{-1} \theta(r)=1$. 

b) The steady state probability that there are $b$ channels in use is
$$ P\left[\sum_{i=1}^k X_i=b\right] =  A \cdot \left(\prod_{r=0}^{b-1}\theta(r) \right) \frac{\rho^b}{b!}  \quad \forall b \in \{0, 1, \ldots, m\} $$
where $(X_1, ..., X_k)$ represents a random state vector with distribution equal to the steady state distribution. 

c) The constant $A$  is equal to 
\begin{equation} 
A = p(0,0,...,0)=\frac{1}{\sum_{b=0}^m \left(\prod_{r=0}^{b-1}\theta(r) \right) \frac{\rho^b}{b!}} \label{eq:A} 
\end{equation} 
\end{thm}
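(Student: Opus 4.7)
The plan is to verify part (a) by checking the detailed balance equations \eqref{eq:detail-general} directly for the guessed distribution \eqref{eq:distribution1}, and then to derive parts (b) and (c) from (a) by marginalization and normalization respectively.

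For (a), I would first observe that the only pairs of distinct states with nonzero transition rates are of the form $x$ and $x+e_j$ for some $j \in \{1,\ldots,k\}$, so the only nontrivial detailed balance equations are
$$p(x)\lambda_j \theta(b) = p(x+e_j)(x_j+1)\mu_j,$$
where $b=\sum_{i=1}^k x_i$. Substituting \eqref{eq:distribution1}, nearly everything in the ratio $p(x+e_j)/p(x)$ cancels outside the $j$-th coordinate and the last factor of the $\theta$-chain, leaving
$$\frac{p(x+e_j)}{p(x)} = \theta(b)\cdot\frac{\rho_j}{x_j+1}.$$
The detailed balance identity then reduces to $\rho_j=\lambda_j/\mu_j$, which holds by definition. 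Since the normalized version of \eqref{eq:distribution1} is a valid probability mass function, detailed balance implies the global balance equations, and irreducibility (already established after \eqref{eq:sp-3}) gives uniqueness, so \eqref{eq:distribution1} is the unique steady state; reversibility is then immediate from the definition.

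For (b), I would fix $b \in \{0,1,\ldots,m\}$ and sum \eqref{eq:distribution1} over all states with $\sum_i x_i=b$. Because the $\theta$-product depends only on $b$, it factors out and leaves
$$A\left(\prod_{r=0}^{b-1}\theta(r)\right)\sum_{x_1+\cdots+x_k=b}\prod_{i=1}^k \frac{\rho_i^{x_i}}{x_i!} = A\left(\prod_{r=0}^{b-1}\theta(r)\right)\frac{\rho^b}{b!},$$
where the inner sum is evaluated using the multinomial expansion of $\rho^b=(\rho_1+\cdots+\rho_k)^b$. For (c), summing the expression from (b) over $b\in\{0,1,\ldots,m\}$ must equal $1$, which solves for $A$; the identification $A=p(0,\ldots,0)$ then follows by plugging the zero vector into \eqref{eq:distribution1} and using the empty-product convention $\prod_{r=0}^{-1}\theta(r)=1$.

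The main obstacle is conceptual rather than computational: one must guess the product form \eqref{eq:distribution1} in the first place, since the CTMC is not an open migration process in the sense of \cite{kelly-reversibility} and the standard template does not apply directly. The excerpt already motivates the guess via the birth-death-like picture of Fig. \ref{fig:birth-death-like} and the $M/M/\infty$ factors $\rho_i^{x_i}/x_i!$, combined with a chain of success probabilities $\theta(r)$ indexed by the aggregate occupancy. Once this guess is written down, the detailed balance check is a short algebraic cancellation, and parts (b) and (c) reduce to one multinomial identity and one normalization.
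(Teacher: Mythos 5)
Your proposal is correct and follows essentially the same route as the paper: verify the detailed balance equations for the two transition types (the paper states this verification is easy; you supply the explicit ratio $p(x+e_j)/p(x)=\theta(b)\rho_j/(x_j+1)$, which is the right cancellation), then obtain (b) by factoring out the $\theta$-chain and applying the multinomial expansion, and (c) by normalization. No gaps.
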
 

\begin{proof} 
Define the mass function $p(x)$ according to \eqref{eq:distribution1}.  It suffices to show that this $p(x)$ mass function satisfies \eqref{eq:detail-general}.  Since 
there are only two types of possible transitions, it suffices to show that 
$$ p(x) \lambda_j\theta\left(\sum_{i=1}^kx_i\right) = p(x+e_j)(x_j+1)\mu_j  \quad \forall x, x + e_j \in \mathcal{S} $$
It is easy to verify that this equation holds for $p(x)$ as given in the statement of the theorem.  This proves part (a). 

To prove (b), we have 
\begin{align*}
P\left[\sum_{i=1}^k X_i = b\right] &=\sum_{x  \in \mathcal{S} : (x_1+...+x_k)=b} p(x) \\
&=\sum_{x  \in \mathcal{S} : (x_1+...+x_k)=b}A \cdot \left(\prod_{r=0}^{b-1}\theta(r)\right) \prod_{i=1}^k\frac{\rho_i^{x_i}}{x_i!}\\
&= A \cdot \left(\prod_{r=0}^{b-1}\theta(r)\right) \underbrace{\sum_{x \in \mathcal{S} : (x_1+...+x_k)=b} \left(\prod_{i=1}^k \frac{\rho_i^{x_i}}{x_i!}\right)}_{(\rho_1+...+\rho_k)^b/b!}
\end{align*}
where we have used the multinomial expansion:
$$ (\rho_1+ \ldots+\rho_k)^b = \sum_{x \in \mathcal{S}:(x_1+...+x_k)=b} \left(\frac{b!}{x_1!x_2!...x_k!}\right)\prod_{i=1}^k \rho_i^{x_i}$$
This proves part (b).  Part (c) immediately follows from part (b). 
\end{proof} 

The success probability of each newly arriving job depends on the 
current state of the system and not on the class of that job. Since all jobs arrive as Poisson arrivals, and Poisson arrivals see time averages (``PASTA,'' see, for example, \cite{bertsekas-data-nets}),  it follows that jobs of all classes $i \in \{1, ..., k\}$ 
 have the same \emph{long term success probability} for finding an available channel. 
Define $\phi$ as this long term success probability. 
Specifically, if $(x_1, ..., x_k)$ represents a random vector with distribution given by the steady state distribution $p(x)$ in Theorem \ref{thm:1}, then $\phi$ is defined
$$ \phi =  P[\mbox{success}] = \sum_{x \in \mathcal{S}} P[\mbox{success} |(x_1,...,x_k)=x]p(x) $$
With this definition of the success probability $\phi$, 
the long term rate of accepted jobs of type $i$ is $\lambda_i \phi$ jobs/time, 
and the long term rate of dropped jobs of type $i$ is $\lambda_i (1-\phi)$ jobs/time. 
Remarkably, the value of $\phi$ depends only on $\rho$, not on the individual $\rho_i$ values, as shown in the following corollary. 

\begin{cor} \label{cor:non-persistent} For any success probability function $\theta(b)$ that 
satisfies \eqref{eq:sp-1}-\eqref{eq:sp-3}, the long term 
success probability $\phi$ is given by 
\begin{equation} \label{eq:phi-nonpersistent}
 \phi = A  \cdot \sum_{b=0}^m\left(\prod_{r=0}^{b}\theta(r)\right)\frac{\rho^b}{b!} 
 \end{equation} 
where  $A$ is the constant defined in  \eqref{eq:A}.
\end{cor}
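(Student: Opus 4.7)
The plan is to compute $\phi$ by conditioning on the current number of busy channels and invoking PASTA, which is already referenced in the paragraph preceding the corollary. Since the arrival process of each class is Poisson, and since classes merge into a single Poisson arrival stream whose success probability depends only on the total occupancy $B = \sum_{i=1}^k X_i$, I would first argue that $\phi = \mathbb{E}[\theta(B)]$, where the expectation is taken with respect to the steady state distribution of Theorem~\ref{thm:1}.

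Next, I would expand the expectation using Theorem~\ref{thm:1}(b), which already tells us the marginal distribution of $B$ in closed form:
\begin{equation*}
P\!\left[B = b\right] = A \cdot \left(\prod_{r=0}^{b-1}\theta(r)\right)\frac{\rho^b}{b!}, \qquad b \in \{0,1,\ldots,m\}.
\end{equation*}
Multiplying this by $\theta(b)$ and summing over $b$ immediately absorbs the extra factor $\theta(b)$ into the product, extending the upper index from $b-1$ to $b$:
\begin{equation*}
\phi = \sum_{b=0}^m \theta(b)\, P[B=b] = A \sum_{b=0}^m \left(\prod_{r=0}^{b}\theta(r)\right)\frac{\rho^b}{b!},
\end{equation*}
which is precisely the claimed formula~\eqref{eq:phi-nonpersistent}.

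The only subtlety, and arguably the ``hard part,'' is justifying the first step carefully. Strictly speaking, PASTA applies to each individual class-$i$ Poisson arrival stream and yields that a class-$i$ arrival sees the time-averaged distribution of $X(t)$. The success probability conditional on $X(t)=x$ is $\theta(x_1+\cdots+x_k)$, which depends on $x$ only through $B$. Thus each class sees the same long-term success probability $\mathbb{E}[\theta(B)]$, independent of $i$, and this common value equals $\phi$. Once this is spelled out, the computation above finishes the proof, and part~(c) of Theorem~\ref{thm:1} gives the explicit value of $A$ that one may substitute if a fully closed-form expression is desired.
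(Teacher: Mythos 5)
Your proposal is correct and follows essentially the same route as the paper: condition on the number of busy channels $B$, use PASTA (already invoked in the paragraph preceding the corollary) to identify $\phi$ with $\sum_{b=0}^m \theta(b)\,P[B=b]$, and substitute the marginal from Theorem~\ref{thm:1}(b) so that the extra $\theta(b)$ extends the product's upper index from $b-1$ to $b$. Your more careful spelling-out of the PASTA step per class is a welcome elaboration but not a different argument.
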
 
\begin{proof} 
The long term success probability is given by
\begin{align*}
\phi &= P[\mbox{success}]= \sum_{b=0}^{m} \underbrace{P\left[\mbox{success}|\sum_{i=1}^kX_i=b\right]}_{\theta(b)} P\left[\sum_{i=1}^kX_i=b\right] 
\end{align*}
and the result follows by substituting the expression for $P\left[\sum_{i=1}^kX_i=b\right]$ from part (b) of Theorem \ref{thm:1}. 
\end{proof} 

\subsection{Plots for example cases}

\begin{figure}[htbp]
   \centering
   \includegraphics[height=3in]{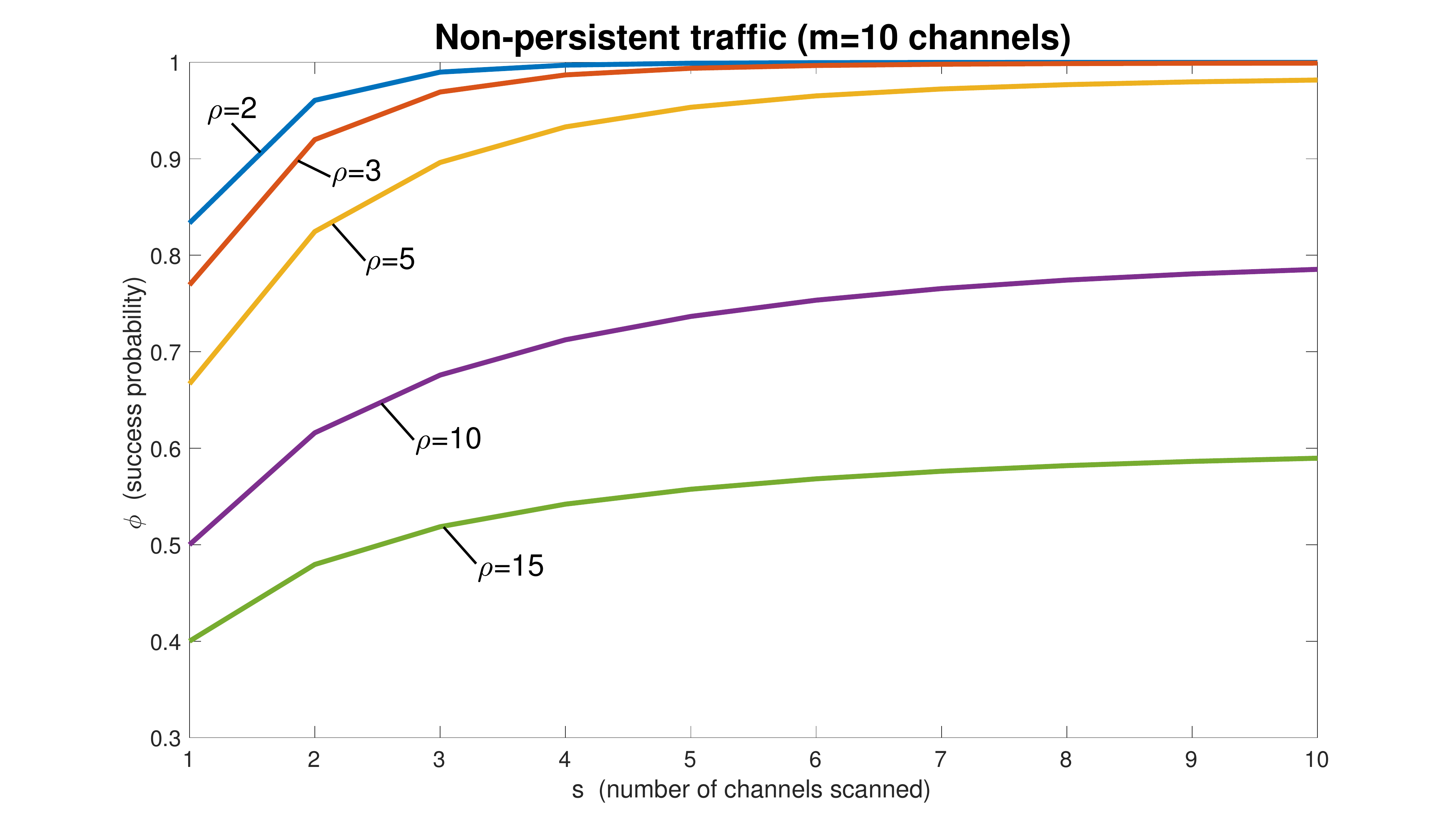} 
   \caption{Success probability for non-persistent traffic with $m=10$ channels.}
   \label{fig:plot1-10}
\end{figure}

\begin{figure}[htbp]
   \centering
   \includegraphics[height=3in]{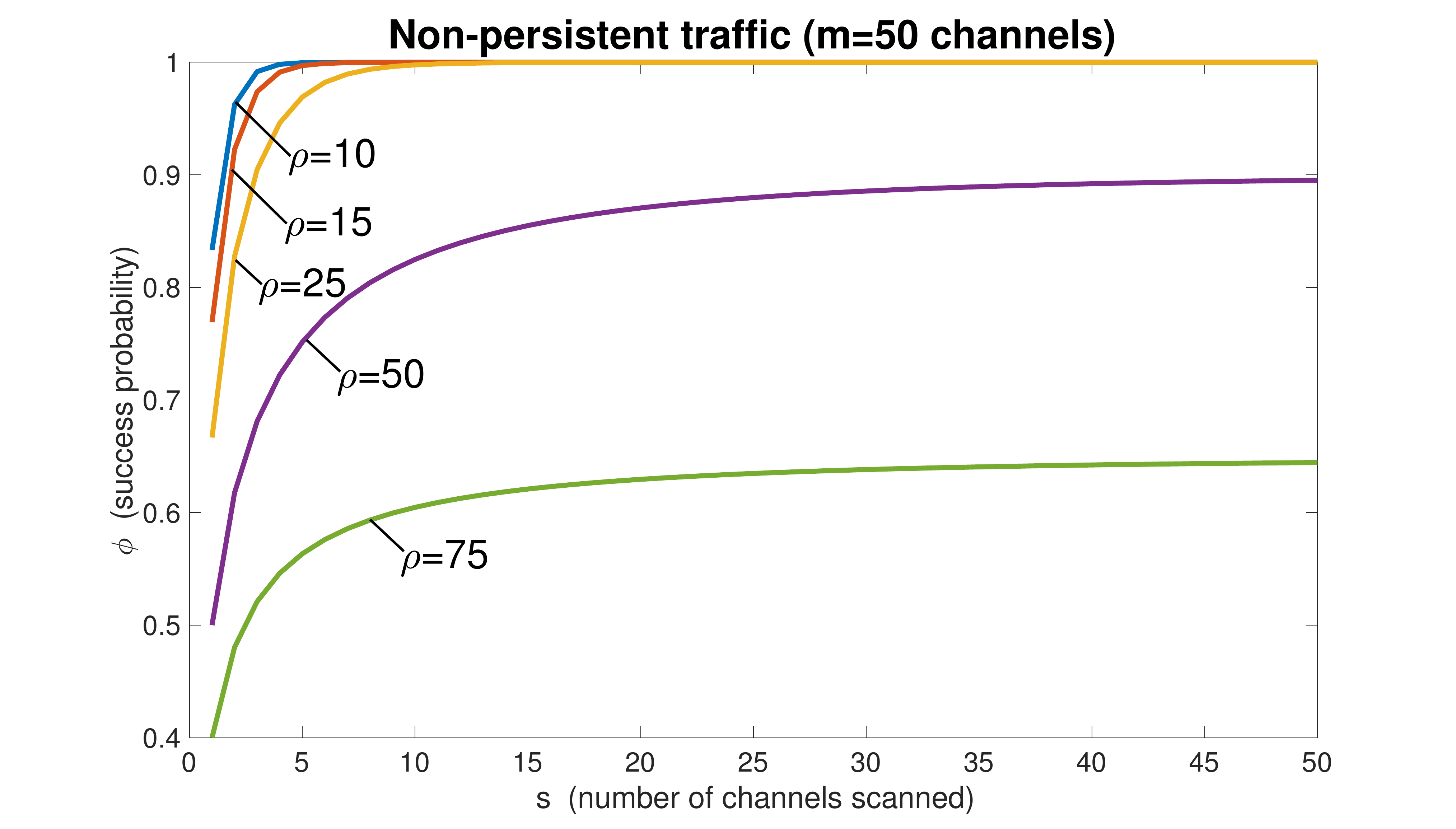} 
   \caption{Success probability for non-persistent traffic with $m=50$ channels.}
   \label{fig:plot1-50}
\end{figure}

\begin{figure}[htbp]
   \centering
   \includegraphics[height=3in]{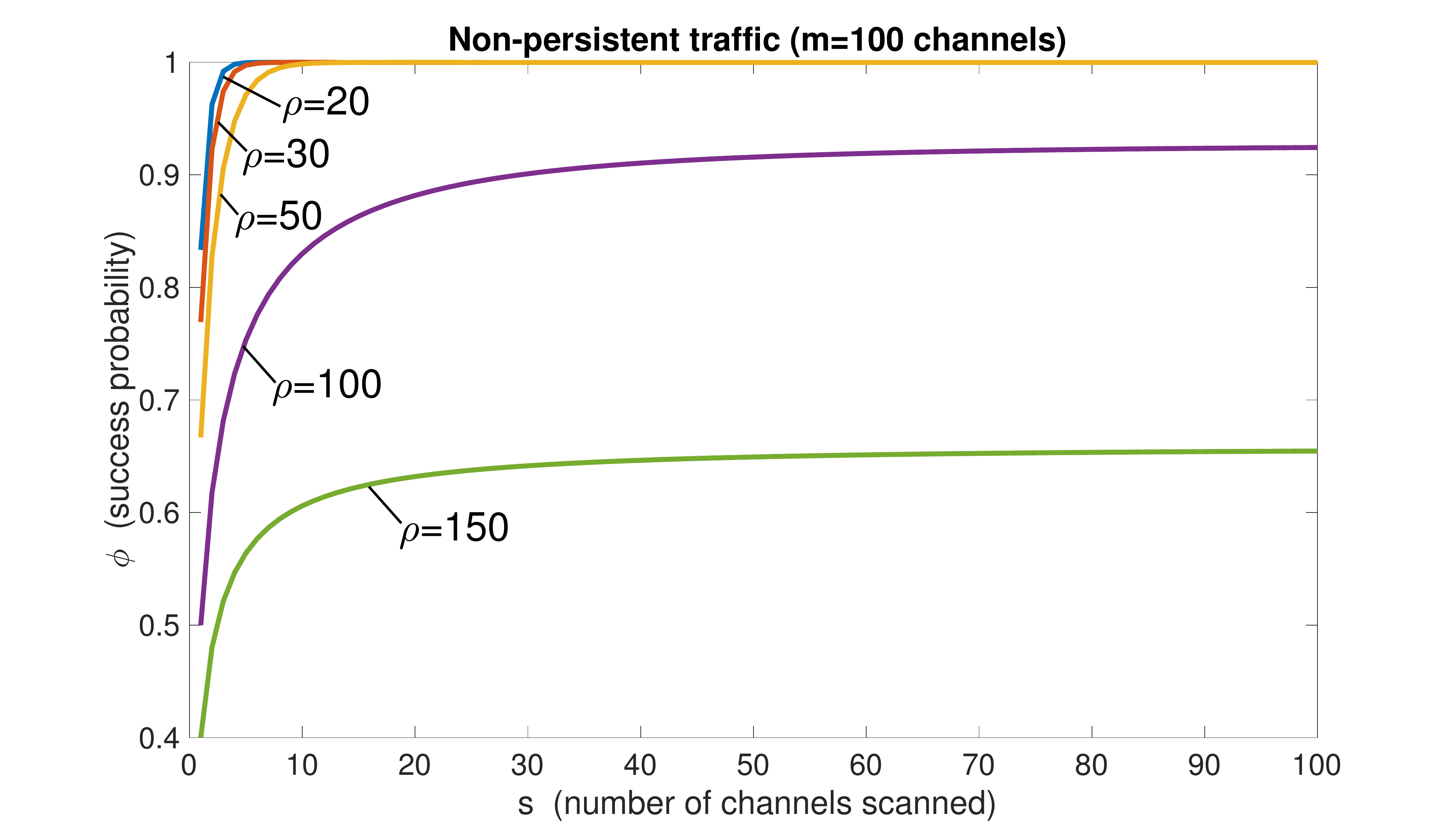} 
   \caption{Success probability for non-persistent traffic with $m=100$ channels.}
   \label{fig:plot1-100}
\end{figure}

Corollary \ref{cor:non-persistent} shows that the success probability $\phi$ depends only on the conditional success probabilities $\theta(b)$ and on
the loading parameter $\rho=\sum_{i=1}^k \lambda_i/\mu_i$.  This is an insightful result:   We can understand the success probability through the 
single parameter $\rho$, regardless of the number of classes $k$ of non-persistent users and regardless of the specific $\lambda_i$ and $\mu_i$ parameters for
each class $i \in \{1, \ldots, k\}$.  Notice that, by Little's theorem,  $\rho=\sum_{i=1}^k \lambda_i/\mu_i$ is equal to the steady state 
average number of actively transmitting users  there would be in a virtual system with \emph{infinite resources}: The virtual system has an infinite number of servers, 
each new file of the virtual system receives its own server with probability 1, and no files are dropped.

Fig. \ref{fig:plot1-10} plots the success probability $\phi$ versus $s$ (the number of channels that each user scans) for the case of $m=10$ channels and 
using the $\theta(b)$ probabilities given in \eqref{eq:example-success-probability}.  The values $\rho \in \{2, 3, 5, 10, 15\}$ are shown.  The case $\rho=10$ is when
the average number of active users in a virtual system with infinite resources is equal to 10, the number of channels in the actual system.  This can be 
viewed as a threshold case: When $\rho$ exceeds $m$  (as plotted for the case $\rho=15$ in Fig. \ref{fig:plot1-10}) 
then success probability is necessarily strictly less than $1$ even when the 
number of channels sensed is equal to $m$.   On the other hand, by choosing $s=2$ we obtain a success probability above $0.8$ when $\rho = 2$ or $\rho=3$. 

Better performance is obtained when the number of channels is increased while the $\rho$ values are increased by the same factor:  Figs. \ref{fig:plot1-50} and \ref{fig:plot1-100} show performance for the case $m=50$ channels and $m=100$ channels, respectively, with corresponding $\rho$ values that maintain the same ratio of $\rho/m$ as in the first figure.   
It can be seen that success probability increases to near 1 when the loading is small ($\rho/m \leq 1/2$).  In all of the plots of Figs. \ref{fig:plot1-10}-\ref{fig:plot1-100} it can be seen that success probability is relatively flat for large values of $s$:  A considerable
amount of energy can be saved by just scanning a small subset of the total number of channels.

\section{Persistent and non-persistent users} \label{section:persistent} 

Fix $n$ as a positive integer and suppose that, in addition to the $k$ classes of non-persistent users,  there are $n$ individual persistent users 
with activity states $A_j(t) \in \{I, W, T\}$ and 
behavior parameters  $\alpha_j, \beta_j, u_j, v_j$, as shown in 
Fig. \ref{fig:3-state}.    The $k$ classes of non-persistent users have parameters $\lambda_i$ and $\mu_i$ for all $i \in \{1, \ldots, k\}$. The values of all parameters $\lambda_i, \mu_i$, $\alpha_j, \beta_j,u_j, v_j$ for $i \in \{1, \ldots, k\}$ and $j \in \{1, \ldots, n\}$ are assumed to be positive. 

Recall that $X_i(t)$ is the current number of non-persistent users of type $i$ transmitting, for $i \in \{1, \ldots, k\}$.  The system state is $W(t) = (X_1(t), \ldots, X_k(t); A_1(t), \ldots, A_n(t))$.   The total number of busy channels is 
$$B(t) = \sum_{i=1}^k X_i(t) + \sum_{j=1}^n 1_{\{A_j(t)=T\}}$$ 
where $1_{\{A_j(t)=T\}}$ is an indicator function that is 1 if persistent user $j$ is transmitting at time $t$, and $0$ else.  Let $\theta(b)$ be a success probability  
function defined for $b \in \{0, 1, \ldots, m\}$  that satisfies \eqref{eq:sp-1}-\eqref{eq:sp-3}  (an example $\theta(b)$ function is in \eqref{eq:example-success-probability}). 
As before, if any user attempts access at a  time $t$ such that $B(t)=b$, its conditional success probability is $\theta(b)$.  Notice from Fig. \ref{fig:3-state} 
that the transition rates for the $W \rightarrow T$ transitions of each persistent user depend on the current value of $B(t)$. 

\subsection{Markov chain model} 

Let $\script{S}$ be the state space of the system: This is the set of all vectors $w=(x;a)$, where 
$x=(x_1, \ldots, x_k)$ and $a=(a_1, \ldots, a_n)$, such that $x_i  \in \{0, 1, 2, \ldots\}$ for all $i \in \{1, \ldots, k\}$,  $a_j \in \{I, W, T\}$ for all $j \in \{1, \ldots, n\}$, and 
$$ \sum_{i=1}^k x_i  + \sum_{j=1}^n 1_{\{a_j=T\}}\leq m$$
For simplicity of notation, for each $w \in \script{S}$ define $\busy(w)$ as the number of busy channels associated with state $w$: 
$$ \busy(w) = \sum_{i=1}^k x_i + \sum_{j=1}^n 1_{\{a_j=T\}}$$
To completely describe the transition rates of this CTMC, let $w=(x;a)$ and $z=(y;b)$ be two distinct states in $\script{S}$.  There are three types of transitions that can occur between states $w$ and $z$:  

\begin{itemize} 
\item Non-persistent user $i \in \{1, \ldots, k\}$ ($x_i\leftrightarrow x_i+1$): Recall that $e_i=(0, ..., 0, 1, 0, ..., 0)$ is a vector of size $k$ with a $1$ in component $i$ and zeros in all other components. 

\begin{itemize}  
\item Transitions $(x;a)\rightarrow (x+e_i; a)$ have rate: 
$$ q_{w,z} = \lambda_i\theta(\busy(w))$$

\item Transitions $(x+e_i; a)\rightarrow (x;a)$ have rate 
$$  q_{z,w} = (x_i+1)\mu_i $$ 
\end{itemize} 

\item Persistent user $j \in \{1, \ldots, n\}$ ($I \leftrightarrow W$): 
\begin{itemize} 
\item Transitions $(x, a_1, \ldots, a_{j-1}, I, a_{j+1}, \ldots, a_n) \rightarrow (x, a_1,\ldots, a_{j-1}, W, a_{j+1}, \ldots, a_n)$ have rate: 
$$ q_{w,z} = \alpha_j$$
\item Transitions $(x, a_1,\ldots, a_{j-1}, W, a_{j+1}, \ldots, a_n) \rightarrow (x, a_1, \ldots, a_{j-1}, I, a_{j+1}, \ldots, a_n)$ have rate: 
$$ q_{z,w} = \beta_j $$
\end{itemize} 
\item Persistent user $j \in \{1, \ldots, n\}$ ($W\leftrightarrow T$): 
\begin{itemize} 
\item Transitions $(x, a_1, \ldots, a_{j-1}, W, a_{j+1}, \ldots, a_n) \rightarrow (x, a_1,\ldots, a_{j-1}, T, a_{j+1}, \ldots, a_n)$ have rate: 
$$ q_{w,z} =u_j \theta(\busy(w))$$
\item Transitions $(x_0, a_1,\ldots, a_{j-1}, T, a_{j+1}, \ldots, a_n) \rightarrow (x_0, a_1, \ldots, a_{j-1}, W, a_{j+1}, \ldots, a_n)$ have rate: 
$$ q_{z,w} = v_j $$
\end{itemize} 
\end{itemize} 
It is not difficult to show that the CTMC is irreducible. Indeed, every state can reach the state
$(0, I, I, I, ..., I)$ from a sequence of transitions  that includes no new arrivals, has each transmitting user finish, and has all  persistent users eventually move to the Idle state.  Likewise, the state
$(0,I,I,I, ..., I)$ can reach every state in $\script{S}$. 

\subsection{Steady state probabilities} 

Motivated by the ``birth-death-like'' structure of the persistent user dynamics shown in Fig. \ref{fig:3-state} and by the structure of the steady state probabilities for the non-persistent user case, 
we make the following guess about steady state:  With $w=(x_1, \ldots, x_k; a_1, \ldots, a_n)$ we suggest 
\begin{equation} \label{eq:steady-state-2}
p(w) = B \cdot \left(\prod_{r=0}^{\busy(w)-1}\theta(r)\right)\left(\prod_{i=1}^k\frac{\rho_i^{x_i}}{x_i!}\right)\prod_{j=1}^n \left(\frac{\alpha_j}{\beta_j}\right)^{1_{\{a_j=W\}}} \left(\frac{\alpha_ju_j}{\beta_jv_j}\right)^{1_{\{a_j=T\}}} \quad \forall w \in \script{S}
\end{equation} 
where $\rho_i = \lambda_i/\mu_i$ for all $i \in \{1, \ldots, k\}$ and $B$ is a constant that makes all probabilities sum to 1. 

\begin{thm} \label{thm:system-2} The CTMC for this system with persistent and non-persistent users is reversible and the steady state distribution is given by \eqref{eq:steady-state-2}.
\end{thm}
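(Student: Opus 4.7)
The plan is to verify that the probability mass function proposed in \eqref{eq:steady-state-2} satisfies the detailed balance equations \eqref{eq:detail-general} for every pair of neighboring states. By the theory recalled in Section \ref{section:non-persistent}, since the CTMC is irreducible (as already argued in the text) and has finite state space, any probability mass function that solves detailed balance is automatically the unique steady state distribution, and the chain is reversible by definition. Thus the entire argument reduces to checking one algebraic identity for each of the three transition types specified in the Markov chain description, plus noting that the normalizing constant $B$ is positive and finite so that $p$ is a valid mass function.

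I would organize the verification around the three transition types: (i) non-persistent user $(x;a)\leftrightarrow (x+e_i;a)$; (ii) persistent user $j$ with $I\leftrightarrow W$; and (iii) persistent user $j$ with $W\leftrightarrow T$. In each case I pick an arbitrary neighboring pair $(w,z)$, compute the ratio $p(z)/p(w)$ directly from \eqref{eq:steady-state-2}, and compare with $q_{w,z}/q_{z,w}$. For (i), both $x_i$ and $\busy$ increment by one, so the $\rho_i^{x_i}/x_i!$ factor contributes $\rho_i/(x_i+1)$ and the $\theta$-product contributes an extra $\theta(\busy(w))$; using $\rho_i=\lambda_i/\mu_i$, this matches $\lambda_i\theta(\busy(w))/((x_i+1)\mu_i)$. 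For (ii), $\busy$ is unchanged since neither $I$ nor $W$ occupies a channel, so only the $\alpha_j/\beta_j$ factor turns on, giving $p(z)/p(w)=\alpha_j/\beta_j=q_{w,z}/q_{z,w}$. For (iii), $\busy$ increments by one, the persistent-user factor is promoted from $\alpha_j/\beta_j$ to $\alpha_j u_j/(\beta_j v_j)$ (ratio $u_j/v_j$), and the $\theta$-product picks up an extra $\theta(\busy(w))$, so the combined ratio is $u_j\theta(\busy(w))/v_j$, which matches exactly $q_{W\to T}/q_{T\to W}$.

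The main conceptual point, rather than any algebraic obstacle, is recognizing why \eqref{eq:steady-state-2} can work at all: the rates $\lambda_i\theta(\busy(w))$ and $u_j\theta(\busy(w))$ couple all the user dynamics through the shared scalar $\busy(w)$, and a priori one might suspect this coupling destroys reversibility (as in fact it does for the variant model of \cite{lei-mean-field-mac} where a $W\to I$ edge is absent). The key observation making the guess work is that every forward transition which consumes a channel supplies exactly one factor of $\theta(\busy(w))$ to the rate ratio, and this factor is absorbed cleanly by the telescoping scalar $\prod_{r=0}^{\busy(w)-1}\theta(r)$. Once that shared factor is stripped away, the remaining structure of $p(w)$ decouples as an independent product of an $M/M/\infty$-like term $\rho_i^{x_i}/x_i!$ for each non-persistent class and a three-state Fig. \ref{fig:3-state} term $(\alpha_j/\beta_j)^{1_{\{a_j=W\}}}(\alpha_j u_j/\beta_j v_j)^{1_{\{a_j=T\}}}$ for each persistent user, and each sub-factor satisfies detailed balance on its own local transition. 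The proof is therefore not really a calculation about the joint chain at all, but a verification that the guess \eqref{eq:steady-state-2} has been engineered so that the joint detailed balance equations factor into the already-solved balances for the components.
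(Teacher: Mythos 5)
Your proposal is correct and follows essentially the same route as the paper: both verify that the guessed product-form mass function \eqref{eq:steady-state-2} satisfies the detailed balance equations by checking the three transition types (non-persistent $x_i\leftrightarrow x_i+1$, persistent $I\leftrightarrow W$, persistent $W\leftrightarrow T$), with your ratio computations $p(z)/p(w)=q_{w,z}/q_{z,w}$ matching the paper's explicit term-by-term verification. Your closing observation about the telescoping $\theta$-product absorbing the shared coupling factor is a nice articulation of why the guess works, but it is an exposition of the same argument rather than a different one.
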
 

\begin{proof} 
It suffices to show that $p(w)$ defined by \eqref{eq:steady-state-2} satisfies the 
detailed balance equations. Consider states $w$ and $z$ in $\script{S}$.  We consider the three possible transition types: 
\begin{itemize} 
\item Non-persistent users ($x_i\leftrightarrow x_i+1$): For simplicity of notation we consider these transitions for non-persistent class $1$ (the result is similar for a general non-persistent class $i \in \{1, \ldots, k\}$). Fix $w =(x_1, \ldots, x_k; a_1, \ldots, a_n)$ and $z =(x_1+1, \ldots, x_k; a_1, \ldots, a_n)$, both being states in $\script{S}$.  Then
\begin{align*}
p(w)q_{w,z} &= \left[B \cdot \left(\prod_{r=0}^{\busy(w)-1}\theta(r)\right)\left(\prod_{i=1}^k\frac{\rho_i^{x_i}}{x_i!}\right)\prod_{j=1}^n \left(\frac{\alpha_j}{\beta_j}\right)^{1_{\{a_j=W\}}} \left(\frac{\alpha_ju_j}{\beta_jv_j}\right)^{1_{\{a_j=T\}}}\right]\lambda_1\theta(\busy(w))\\
&= \left[B \cdot \left(\prod_{r=0}^{\busy(w)}\theta(r)\right)\left(\frac{\rho_1^{x_1+1}}{(x_1+1)!}\right)\left(\prod_{i=2}^k\frac{\rho_i^{x_i}}{x_i!}\right)\prod_{j=1}^n \left(\frac{\alpha_j}{\beta_j}\right)^{1_{\{a_j=W\}}} \left(\frac{\alpha_ju_j}{\beta_jv_j}\right)^{1_{\{a_j=T\}}}\right](x_1+1)\mu_1\\
&= p(z)q_{z,w}
\end{align*}

\item Persistent user $j \in \{1, \ldots, n\}$ ($I\leftrightarrow W$): For simplicity of notation we consider these transitions for persistent user $1$ (the result is the same for a general persistent user $j \in \{1, \ldots, n\}$).   
Fix $w = (x_1, \ldots, x_k;  I, a_2, \ldots, a_n)$ and $z = (x_1, \ldots, x_k; W, a_2, \ldots, a_n)$, both being states in $\script{S}$. Then
\begin{align*}
p(w)q_{w,z} &= \left[B \cdot \left(\prod_{r=0}^{\busy(w)-1}\theta(r)\right)\left(\prod_{i=1}^k\frac{\rho_i^{x_i}}{x_i!}\right)\prod_{j=2}^n \left(\frac{\alpha_j}{\beta_j}\right)^{1_{\{a_j=W\}}} \left(\frac{\alpha_ju_j}{\beta_jv_j}\right)^{1_{\{a_j=T\}}}\right]\alpha_1\\
&= \left[B \cdot \left(\prod_{r=0}^{\busy(w)-1}\theta(r)\right)\left(\prod_{i=1}^k\frac{\rho_i^{x_i}}{x_i!}\right)\prod_{j=2}^n \left(\frac{\alpha_j}{\beta_j}\right)^{1_{\{a_j=W\}}} \left(\frac{\alpha_ju_j}{\beta_jv_j}\right)^{1_{\{a_j=T\}}}\right]\left(\frac{\alpha_1}{\beta_1}\right)\beta_1\\
&= p(z)q_{z,w}
\end{align*}

\item Persistent user $j \in \{1, \ldots, n\}$ ($W\leftrightarrow T$): For simplicity of notation we consider these transitions for persistent user $1$ (the result is the same for a general persistent user $j \in \{1, \ldots, n\}$). Fix $w = (x_1, \ldots, x_k;  I, a_2, \ldots, a_n)$ and $z = (x_1, \ldots, x_k; W, a_2, \ldots, a_n)$, both being states in $\script{S}$. Then
\begin{align*}
p(w)q_{w,z} &= \left[B\cdot \left(\prod_{r=0}^{\busy(w)-1}\theta(r)\right)\left(\prod_{i=1}^k\frac{\rho_i^{x_i}}{x_i!}\right)\left(\frac{\alpha_1}{\beta_1}\right)\prod_{j=2}^n \left(\frac{\alpha_j}{\beta_j}\right)^{1_{\{a_j=W\}}} \left(\frac{\alpha_ju_j}{\beta_jv_j}\right)^{1_{\{a_j=T\}}}\right]u_1\theta(\busy(w))\\
&= \left[B \cdot \left(\prod_{b=0}^{\busy(w)}\theta(r)\right) \left(\prod_{i=1}^k\frac{\rho_i^{x_i}}{x_i!}\right)\left(\frac{\alpha_1 u_1}{\beta_1 v_1}\right)\prod_{j=2}^n \left(\frac{\alpha_j}{\beta_j}\right)^{1_{\{a_j=W\}}} \left(\frac{\alpha_ju_j}{\beta_jv_j}\right)^{1_{\{a_j=T\}}}\right]v_1\\
&= p(z)q_{z,w}
\end{align*}
\end{itemize} 
\end{proof} 

The steady state probabilities in the above theorem can be simplified by aggregating all non-persistent users. 
Consider a state $w = (x_1, \ldots, x_k; a_1, \ldots, a_n) \in \script{S}$. Define $x=\sum_{i=1}^k x_i$ as the number of non-persistent users associated with this state. 
Define $q(x;a_1, \ldots, a_n)$ as the steady state probability that the total number of non-persistent users is $x$ and the state of the persistent users is $(a_1, \dots, a_n)$.  Define $a=(a_1, \ldots, a_n)$ and define 
$$\busy_p(a) = \sum_{j=1}^n 1_{\{a_j=T\}}$$ 
where the $p$ subscript emphasizes that $\busy_p(a)$ counts the number of busy \emph{persistent} users from the vector $a=(a_1, \ldots, a_n)$.   
In particular, the total number of busy channels for a vector $(x,a)$ is $x+\busy_p(a)$.  A vector $(x,a)$ is said to be a \emph{legitimate vector} if 
$x + \busy_p(a) \leq m$.

\begin{cor}  \label{cor:cor-2} For this system with persistent and non-persistent users  we have for all legitimate vectors $(x,a_1, \ldots, a_n)$: 
\begin{equation} \label{eq:expression3} 
q(x; a_1, \ldots, a_n)  =B \cdot \left(\prod_{r=0}^{x+\busy_p(a)-1}\theta(r)\right)\prod_{j=1}^n \left(\frac{\alpha_j}{\beta_j}\right)^{1_{\{a_j=W\}}} \left(\frac{\alpha_ju_j}{\beta_jv_j}\right)^{1_{\{a_j=T\}}}\frac{\rho^x}{x!} 
\end{equation} 
where $B$ is the same constant used in Theorem \ref{thm:system-2}, $\rho = \sum_{i=1}^k \rho_i$, and  $\rho_i = \lambda_i/\mu_i$ for $i\in\{1, \ldots, k\}$.  Further, in the special case when there are no non-persistent users (so that $\rho \rightarrow 0$) we have 
$$ q(a_1, \ldots, a_n) = B \cdot \left(\prod_{r=0}^{\busy_p(a)-1}\theta(r)\right)\prod_{j=1}^n \left(\frac{\alpha_j}{\beta_j}\right)^{1_{\{a_j=W\}}} \left(\frac{\alpha_ju_j}{\beta_jv_j}\right)^{1_{\{a_j=T\}}}  $$
\end{cor}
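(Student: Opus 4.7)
The plan is to obtain \eqref{eq:expression3} by aggregating the joint steady-state distribution from Theorem \ref{thm:system-2} over all non-persistent configurations $(x_1, \ldots, x_k)$ whose coordinates sum to the prescribed value $x$. By definition,
\begin{equation*}
q(x; a_1, \ldots, a_n) = \sum_{\substack{(x_1, \ldots, x_k):\\ x_1 + \cdots + x_k = x}} p(x_1, \ldots, x_k; a_1, \ldots, a_n),
\end{equation*}
so the task reduces to a calculation on the expression \eqref{eq:steady-state-2}.

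The key observation is that, for a legitimate vector with $\sum_i x_i = x$, the value of $\busy(w)$ equals $x + \busy_p(a)$ and thus does \emph{not} depend on the individual $x_i$ coordinates. Consequently, in the product \eqref{eq:steady-state-2} the factor $\prod_{r=0}^{\busy(w)-1}\theta(r)$ and the entire persistent-user product over $j \in \{1, \ldots, n\}$ can be pulled out of the sum. What remains inside the sum is only $\prod_{i=1}^k \rho_i^{x_i}/x_i!$, summed subject to $\sum_i x_i = x$.

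The main (and only real) computation is then handled exactly as in part (b) of Theorem \ref{thm:1}: applying the multinomial theorem,
\begin{equation*}
\sum_{\substack{(x_1, \ldots, x_k):\\ x_1 + \cdots + x_k = x}} \prod_{i=1}^k \frac{\rho_i^{x_i}}{x_i!} = \frac{1}{x!}\sum_{\substack{(x_1, \ldots, x_k):\\ x_1 + \cdots + x_k = x}} \binom{x}{x_1, \ldots, x_k} \prod_{i=1}^k \rho_i^{x_i} = \frac{(\rho_1 + \cdots + \rho_k)^x}{x!} = \frac{\rho^x}{x!}.
\end{equation*}
Substituting this back gives \eqref{eq:expression3}. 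There is no real obstacle here; the aggregation works precisely because $\theta(\cdot)$ only cares about the aggregate channel occupancy.

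For the special case with no non-persistent users, the situation degenerates to $\rho \to 0$, which forces $x = 0$ with probability one. In \eqref{eq:expression3} the factor $\rho^x/x!$ then equals $1$ at $x=0$ (using $0^0 = 1$), the product $\prod_{r=0}^{x+\busy_p(a)-1}\theta(r)$ reduces to $\prod_{r=0}^{\busy_p(a)-1}\theta(r)$, and the claimed formula for $q(a_1, \ldots, a_n)$ drops out immediately. This confirms the corollary in both the general and degenerate regimes.
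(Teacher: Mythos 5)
Your proposal is correct and follows essentially the same route as the paper: sum the joint distribution \eqref{eq:steady-state-2} over all $(x_1,\ldots,x_k)$ with fixed total $x$, pull out the $\theta$ and persistent-user factors (which depend only on the aggregate), and evaluate the remaining sum via the multinomial theorem to get $\rho^x/x!$. Your explicit treatment of the $\rho\to 0$ special case is a small addition the paper leaves implicit, but it changes nothing substantive.
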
 

\begin{proof} 
From \eqref{eq:steady-state-2} we have: 
\begin{align*}
q(x, a_1, \ldots, a_n) &= \sum_{w\in \script{S}: \sum_{i=1}^kx_i=x} p(w) \\
&=B \cdot \left(\prod_{r=0}^{x+\busy_p(a)-1}\theta(r)\right)\prod_{j=1}^n \left(\frac{\alpha_j}{\beta_j}\right)^{1_{\{a_j=W\}}} \left(\frac{\alpha_ju_j}{\beta_jv_j}\right)^{1_{\{a_j=T\}}}\underbrace{\sum_{x_i: \sum_{i=1}^kx_i=x}\prod_{i=1}^k \frac{\rho_i^{x_i}}{x_i!}}_{\frac{(\rho_1 + \ldots + \rho_k)^x}{x!}}
\end{align*}
Thus, by defining $\rho = \sum_{i=1}^k\rho_i$, we have 
$$ q(x, a_1, \ldots, a_n) =B \cdot \left(\prod_{r=0}^{x+\busy_p(a)-1}\theta(r)\right)\prod_{j=1}^n \left(\frac{\alpha_j}{\beta_j}\right)^{1_{\{a_j=W\}}} \left(\frac{\alpha_ju_j}{\beta_jv_j}\right)^{1_{\{a_j=T\}}}\frac{\rho^x}{x!} $$
\end{proof}

\subsection{Solution complexity} 

The formulas \eqref{eq:steady-state-2} and \eqref{eq:expression3} establish steady state 
probabilities for a very large number of system states. 
The number of states grows exponentially in the problem size.  For example, just considering the 3 possibilities $I, W, T$ for each persistent user, 
we find the number of states is at least $3^{\min[n,m]}$. 
If $\min[n,m] \geq 180$ then $3^{\min[n,m]} \geq 10^{85}$, meaning that the number of states is larger than the current estimate for 
the number of atoms in the universe.  Thus, it is not immediately clear how to compute the constant $B$, and how to 
use the formulas \eqref{eq:steady-state-2} and \eqref{eq:expression3}     to calculate 
things such as the marginal fraction of time that persistent user 1 is busy, the throughput and success probability of persistent user 1, and the throughput and 
success probabilities of the different classes of non-persistent users.  
For some problems that involve reversible networks, such as the admission control problems in 
\cite{bertsekas-data-nets} that are solved by truncation of $M/M/\infty$ queues, 
it can be shown that even calculating the proportionality constant $B$ to within a reasonable approximation is NP-hard \cite{normalizing-coeff-reversible} (see also
\cite{factor-graph-tatikonda-TON} for factor graph approximation methods).   Fortunately, 
our problem has enough structure to allow efficient computation of all of these things via a discrete Fourier transform.  That is developed  next.

\subsection{Calculating $B$} 

Define 
$$ g = \min[m,n]$$
We can sum the probabilities in \eqref{eq:expression3} by grouping states into those that have $b$
persistent users that are busy, for $b \in \{0, 1, \ldots, g\}$, and $x$ non-persistent users: 
\begin{align}
1 &= \sum_{b=0}^g\sum_{x=0}^{m-b} \sum_{a : \busy_p(a)=b} q(x,a_1, \ldots, a_n) \nonumber \\
&= B\sum_{b=0}^g \sum_{x=0}^{m-b} \frac{\rho^x}{x!}\left(\prod_{r=0}^{x+b-1}\theta(r)\right)\sum_{a:\busy_p(a)=b}\prod_{j=1}^n \left(\frac{\alpha_j}{\beta_j}\right)^{1_{\{a_j=W\}}} \left(\frac{\alpha_ju_j}{\beta_jv_j}\right)^{1_{\{a_j=T\}}} \nonumber\\
&=  B\sum_{b=0}^g \sum_{x=0}^{m-b} \frac{\rho^x}{x!}\left(\prod_{r=0}^{x+b-1}\theta(r)\right)c_b 
 \label{eq:plug1} 
\end{align}
where we define $c_b$ by
\begin{equation} \label{eq:cb} 
c_b = \sum_{a: \busy_p(a)=b} \prod_{j=1}^n \left(\frac{\alpha_j}{\beta_j}\right)^{1_{\{a_j=W\}}} \left(\frac{\alpha_ju_j}{\beta_jv_j}\right)^{1_{\{a_j=T\}}}\quad \forall b \in \{0, 1, \ldots, n\}
\end{equation} 
Notice that these $c_b$ values are defined for all $b \in \{0, 1, \ldots, n\}$, even if the number of persistent users $n$ is larger than the number of channels $m$ (so that only $c_0, \ldots, c_m$ are used in \eqref{eq:plug1}). 

It is difficult to obtain the value of $c_b$ by a direct summation in \eqref{eq:cb} because there are so many terms.  However, 
we can define a related polynomial function $f(z)$ defined for all complex numbers $z \in \mathbb{C}$: 
$$ f(z) = \prod_{j=1}^n \left[1 + \left(\frac{\alpha_j}{\beta_j}\right) + z\left(\frac{\alpha_ju_j}{\beta_jv_j}\right)\right]$$
For any given $z \in \mathbb{C}$, the value $f(z)$ can be easily computed as a product of $n$ (complex-valued) terms. 
We observe that 
$$ f(z) = \sum_{b=0}^n c_b z^b$$
This motivates a discrete Fourier transform approach:  Define $i =\sqrt{-1}$ and define 
$$ C_t = f(e^{\frac{-2\pi i t}{n+1}}) = \sum_{b=0}^n c_b e^{\frac{-2\pi i b t}{n+1}} \quad \forall t \in \{0, 1, \ldots, n\}$$
The sequence $\{C_t\}_{t=0}^n$ is the discrete Fourier transform of $\{c_b\}_{b=0}^n$.  The inverse transform gives
$$ c_b = \frac{1}{n+1}\sum_{t=0}^n C_t e^{\frac{2\pi i tb}{n+1}} \quad \forall b \in \{0, 1, \ldots, n\} $$
Of course, these values of $c_b$ only need to be computed for $b \in \{0, 1, \ldots, g\}$ for use in \eqref{eq:plug1}. 
These findings are summarized in the following lemma. 

\begin{lem} (Calculating $B$)  The value $B$ in Theorem \ref{thm:system-2} and Corollary \ref{cor:cor-2} is 
\begin{equation} \label{eq:B}
B = \frac{1}{\sum_{b=0}^g\sum_{x=0}^{m-b}\frac{\rho^x}{x!} \left(\prod_{r=0}^{x+b-1}\theta(r)\right) c_b}
\end{equation} 
where $g=\min[n,m]$ and $c_b$ is defined
\begin{equation} \label{eq:c-b-calc} 
c_b = \frac{1}{n+1} \sum_{t=0}^n C_t e^{\frac{2 \pi i t b}{n+1}} \quad \forall b \in \{0, 1, \ldots, g\}
\end{equation} 
with $i = \sqrt{-1}$ and with $C_t$ given by
$$ C_t = \prod_{j=1}^n\left[1 + \left(\frac{\alpha_j}{\beta_j}\right) + e^{\frac{-2 \pi i  t}{n+1}}\left(\frac{\alpha_ju_j}{\beta_jv_j}\right)\right] \quad \forall t \in \{0, 1, \ldots, n\}$$
Further, in the special case when there are no non-persistent users (so that $\rho \rightarrow 0$) the constant $B$ is replaced by $\tilde{B}$ given by
\begin{equation} \label{eq:B-tilde}
 \tilde{B} = \frac{1}{\sum_{b=0}^g \left(\prod_{r=0}^{b-1}\theta(r)\right) c_b}
 \end{equation} 
\end{lem}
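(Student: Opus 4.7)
The plan is to derive equation \eqref{eq:B} by computing the normalization constant from the requirement $\sum_{w \in \script{S}} p(w) = 1$, and then to verify the DFT-based formula \eqref{eq:c-b-calc} for the coefficients $c_b$. Most of the work for the first part is already done in the display \eqref{eq:plug1} that precedes the lemma: the summation over all legitimate $(x,a)$ is split by first fixing $b = \busy_p(a) \in \{0,1,\ldots,g\}$, then summing over $x \in \{0,1,\ldots,m-b\}$, and finally summing over all $a \in \{I,W,T\}^n$ with $\busy_p(a)=b$. Using Corollary~\ref{cor:cor-2}, the inner sum over such $a$ factors out exactly as the quantity $c_b$ defined in \eqref{eq:cb}, and the remaining factor $\rho^x/x!$ times the product of $\theta(r)$'s emerges from the non-persistent aggregation. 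Solving the equation $1 = B \sum_{b,x} (\rho^x/x!)(\prod_r \theta(r)) c_b$ for $B$ yields \eqref{eq:B}.

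The crux of the lemma is therefore to justify the DFT formula \eqref{eq:c-b-calc} for $c_b$. First I would establish the generating-function identity
\[
f(z) \defequiv \prod_{j=1}^n\left[1 + \frac{\alpha_j}{\beta_j} + z\,\frac{\alpha_j u_j}{\beta_j v_j}\right] = \sum_{b=0}^n c_b\, z^b.
\]
To see this, expand the product: each factor contributes one of three terms, corresponding bijectively to the choices $a_j \in \{I, W, T\}$ (with $1$, $\alpha_j/\beta_j$, and $z\,\alpha_j u_j/(\beta_j v_j)$ respectively). The total power of $z$ in any expanded monomial equals the number of indices $j$ with $a_j = T$, which is $\busy_p(a)$. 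Collecting all monomials with $\busy_p(a)=b$ gives exactly $c_b$ as the coefficient of $z^b$.

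Next, since $f$ is a polynomial of degree at most $n$, its coefficients $(c_0,\ldots,c_n)$ are recovered exactly by the inverse discrete Fourier transform of length $n+1$. Evaluating $f$ at the $(n+1)$-th roots of unity yields
\[
C_t = f\!\left(e^{-2\pi i t/(n+1)}\right) = \sum_{b=0}^n c_b\, e^{-2\pi i b t/(n+1)},
\]
and the standard orthogonality of characters on $\mathbb{Z}/(n+1)\mathbb{Z}$ inverts this to give \eqref{eq:c-b-calc}. Substituting the product formula for $f$ at $z = e^{-2\pi i t/(n+1)}$ yields the stated expression for $C_t$.

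Finally, the special case $\rho \to 0$ is handled by noting that in \eqref{eq:B} only the $x=0$ term survives (since $\rho^x/x! \to 0$ for $x\geq 1$ and equals $1$ for $x=0$), which reduces the double sum to $\sum_{b=0}^g (\prod_{r=0}^{b-1}\theta(r)) c_b$ and gives $\tilde{B}$ as in \eqref{eq:B-tilde}. No step is genuinely difficult; the only place that requires care is verifying the three-way correspondence in the expansion of $f(z)$ so that the coefficients match $c_b$ exactly as defined in \eqref{eq:cb}. Once that bijection is written out, the DFT inversion is standard and the lemma follows.
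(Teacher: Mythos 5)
Your proposal is correct and follows essentially the same route as the paper: the normalization identity is obtained exactly as in the display preceding the lemma, the generating-function identity $f(z)=\sum_{b=0}^n c_b z^b$ is the paper's key observation (you merely spell out the three-way $\{I,W,T\}$ correspondence that the paper asserts without proof), and the inverse DFT step and the $\rho\to 0$ specialization match the paper's treatment. No gaps.
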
 

\begin{proof} 
The proof is contained in the development immediately preceding  the lemma. 
\end{proof}

\subsection{Performance for the individual persistent users} 

Fix $j \in \{1, \ldots, n\}$. Define $P[I_j]$, $P[W_j]$ and $P[T_j]$ as the steady state probability that 
persistent user $j$ is idle, waiting, or transmitting, respectively (see Fig. \ref{fig:3-state}). Define performance variables 
$\gamma_j$ and $\phi_j$ as follows: 
\begin{itemize} 
\item $\gamma_j$ is the \emph{throughput} of persistent user $j$. This is  
the rate at which this user successfully accesses a channel of the multi-access system. Because all files that successfully  access 
a channel are eventually served, $\gamma_j$ is also the rate of file service for persistent user $j$ and so 
\begin{equation} \label{eq:gamma-j-def} 
 \gamma_j = P[T_j]v_j
 \end{equation} 
\item $\phi_j$ is the \emph{success ratio} of persistent user $j$.  This is the rate of access successes divided by the rate 
of access attempts: 
\begin{equation} \label{eq:phi-j-def} 
\phi_j = \frac{\gamma_j}{P[W_j]u_j}
\end{equation} 
\end{itemize} 
The next two lemmas show that: (i) These values can be obtained in terms of $P[I_j]$; (ii)  The probability $P[I_j]$ can be computed via the discrete Fourier transform. 

\begin{lem} For persistent user $j \in \{1, \ldots, n\}$ we have 
\begin{align*}
P[W_j] &= P[I_j](\alpha_j/\beta_j)\\
P[T_j] &= 1-P[I_j](1+(\alpha_j/\beta_j)) \\
\gamma_j &= v_j - v_j P[I_j](1+(\alpha_j/\beta_j))  \\
\phi_j &= \frac{v_j\beta_j - v_jP[I_j](\beta_j+\alpha_j)}{u_j\alpha_j P[I_j]}
\end{align*}
\end{lem}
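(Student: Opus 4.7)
The plan is to derive the single ratio $P[W_j]/P[I_j]$ from reversibility and then cascade every other identity from total probability and the definitions \eqref{eq:gamma-j-def} and \eqref{eq:phi-j-def}. First I would invoke Theorem \ref{thm:system-2}, which says the CTMC is reversible, and apply the detailed balance equation $p(w) q_{w,z} = p(z) q_{z,w}$ to every pair of states $w, z \in \script{S}$ that differ only in the $j$-th activity coordinate, with $a_j = I$ in $w$ and $a_j = W$ in $z$. Because the $I \to W$ rate is $\alpha_j$ and the $W \to I$ rate is $\beta_j$ irrespective of the rest of the state, each such pair contributes $p(w)\alpha_j = p(z)\beta_j$. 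Summing this equation over all such matched pairs partitions the marginal events $\{A_j = I\}$ and $\{A_j = W\}$, and yields $\alpha_j P[I_j] = \beta_j P[W_j]$, which is the first claimed identity.

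Second, I would use the fact that the three events $\{A_j = I\}$, $\{A_j = W\}$, $\{A_j = T\}$ partition the sample space, so $P[T_j] = 1 - P[I_j] - P[W_j]$, and substitute the first identity to get $P[T_j] = 1 - P[I_j](1 + \alpha_j/\beta_j)$. Third, plugging into the definition $\gamma_j = v_j P[T_j]$ from \eqref{eq:gamma-j-def} gives the stated throughput formula. Fourth, using $\phi_j = \gamma_j/(u_j P[W_j])$ from \eqref{eq:phi-j-def} together with $P[W_j] = P[I_j](\alpha_j/\beta_j)$, multiplying numerator and denominator by $\beta_j$, and collecting terms produces the stated success ratio.

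The only subtlety worth flagging is that one cannot obtain the $P[T_j]$ relation in the same style as $P[W_j]$, because the $W \to T$ transition rate $u_j\theta(\busy(w))$ depends on the global busy count through $\theta$ and therefore does not sum cleanly across the coupling between persistent users and non-persistent traffic. This is exactly why the proof routes the $P[T_j]$ computation through total probability rather than through a second detailed-balance aggregation; the $I \leftrightarrow W$ transition is the unique persistent-user transition whose rate is state-independent enough to marginalize, and it is fortunate that this one relation, together with normalization and the two definitions, is sufficient to unlock all four identities.
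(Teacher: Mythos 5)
Your proof is correct, and every step after the first identity (total probability, then \eqref{eq:gamma-j-def} and \eqref{eq:phi-j-def}) coincides with the paper's. The one place you diverge is in how you justify the key relation $\alpha_j P[I_j] = \beta_j P[W_j]$. You obtain it by invoking the reversibility established in Theorem \ref{thm:system-2}: for each pair of global states differing only in coordinate $j$ (with $a_j=I$ versus $a_j=W$), detailed balance gives $p(w)\alpha_j = p(z)\beta_j$, and since toggling $I\leftrightarrow W$ in coordinate $j$ is a bijection of $\script{S}$ preserving the busy count, summing over all pairs marginalizes cleanly to $\alpha_j P[I_j]=\beta_j P[W_j]$. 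The paper instead uses a cut-set (sample-path) argument on the 3-state mini-chain of Fig.~\ref{fig:3-state}: because $I$ and $T$ are not directly connected, every crossing of the cut separating $\{I\}$ from $\{W,T\}$ is an $I\to W$ or $W\to I$ transition, the counts of these differ by at most one, and hence their long-run rates $P[I_j]\alpha_j$ and $P[W_j]\beta_j$ are equal. The trade-off: your route leans on the full reversibility theorem (which is available and already proved, so this costs nothing here), whereas the paper's cut argument needs only that the $I\leftrightarrow W$ rates are state-independent and that the mini-chain has path topology, and so would survive even in variants of the model where global reversibility fails. Your closing observation --- that the same marginalization cannot be run on the $W\leftrightarrow T$ pairs because the rate $u_j\theta(\busy(w))$ does not factor out of the sum, forcing the $P[T_j]$ identity through normalization instead --- is exactly the right diagnosis and matches the structure of the paper's argument.
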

\begin{proof} 
We use an argument similar to \emph{cut set equations} for CTMCs (in this case we use cuts on the incompletely described CTMC 
of Fig. \ref{fig:3-state}): Consider the 3-state picture of Fig. \ref{fig:3-state} associated with persistent user $j$.  The total number
of transitions  $I\rightarrow W$ is always within 1 of the number of transitions $W\rightarrow I$.  Hence, the long term time average
rate of transitions $I\rightarrow W$ (in units of transitions/time) is the same as the long term rate for transitions $W\rightarrow I$: 
$$ P[I_j] \alpha_j = P[W_j]\beta_j  \implies P[W_j] = P[I_j](\alpha_j/\beta_j)$$
On the other hand we know $P[I_j]+P[W_j]+P[T_j]=1$ and so 
$$ P[T_j] = 1-P[I_j](1+(\alpha_j/\beta_j)) $$
The resulting values of $\gamma_j$ and $\phi_j$ are  obtained by \eqref{eq:gamma-j-def} and \eqref{eq:phi-j-def}. 
\end{proof}  

\begin{lem} For each persistent user $j \in \{1, \ldots, n\}$, the value of $P[I_j]$ is  
\begin{equation} \label{eq:I-j}
P[I_j]=B \sum_{b=0}^{\min[n-1,m]}\sum_{x=0}^{m-b} \left(\prod_{r=0}^{x+b-1}\theta(r)\right) \frac{\rho^x}{x!}c_{j,b}
\end{equation} 
where $B$ is given in \eqref{eq:B}, $\rho = \sum_{i=1}^k(\lambda_i/\mu_i)$, and  $c_{j,b}$ is defined by 
\begin{equation} \label{eq:c-j-b}  
c_{j,b} = \frac{1}{n} \sum_{t=0}^{n-1}\prod_{l \in \{1, \ldots, n\} \setminus j} \left[1 + \left(\frac{\alpha_l}{\beta_l}\right) + e^{\frac{-2\pi i t}{n}}\left(\frac{\alpha_lu_l}{\beta_lv_l}\right)\right]  e^{\frac{2\pi i t b}{n}} \quad \forall b \in \{0, 1, \ldots, \min[n-1,m]\}
\end{equation} 
In the special case when there are no non-persistent users (so that $\rho \rightarrow 0$) we obtain 
$$ P[I_j] = \tilde{B} \sum_{b=0}^{\min[n-1,m]}\left(\prod_{r=0}^{b-1}\theta(r)\right) c_{j,b}$$
where $\tilde{B}$ is given by \eqref{eq:B-tilde}. 
\end{lem}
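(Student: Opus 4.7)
The plan is to mimic the derivation of the previous lemma (the calculation of $B$) but with persistent user $j$ ``frozen'' in the idle state. Starting from the joint distribution in Corollary \ref{cor:cor-2}, the marginal probability $P[I_j]$ is obtained by summing $q(x;a_1,\ldots,a_n)$ over all legitimate vectors with $a_j=I$. When $a_j=I$, the $j$-th factor in the product over persistent users contributes $1$, and the contribution of user $j$ to $\busy_p(a)$ is $0$. Hence, grouping terms by $b := \sum_{l\neq j} 1_{\{a_l=T\}}$ (the number of \emph{other} persistent users that are busy), and by the number $x$ of non-persistent users, the same computation that produced \eqref{eq:plug1} yields
\begin{equation*}
P[I_j] \;=\; B\sum_{b=0}^{\min[n-1,m]} \sum_{x=0}^{m-b} \frac{\rho^x}{x!} \left(\prod_{r=0}^{x+b-1}\theta(r)\right) c_{j,b},
\end{equation*}
where $c_{j,b}$ is the ``leave-one-out'' version of $c_b$:
\begin{equation*}
c_{j,b} \;=\; \sum_{\substack{(a_l)_{l\neq j}\\ \sum_{l\neq j} 1_{\{a_l=T\}}=b}} \prod_{l\in\{1,\ldots,n\}\setminus j} \left(\frac{\alpha_l}{\beta_l}\right)^{1_{\{a_l=W\}}} \left(\frac{\alpha_lu_l}{\beta_lv_l}\right)^{1_{\{a_l=T\}}}.
\end{equation*}
The upper limit $\min[n-1,m]$ reflects that only the $n-1$ remaining persistent users can be busy, and that no more than $m$ channels are available.

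Next I would apply the same generating-function / DFT trick as in the previous lemma, but to a polynomial of degree $n-1$ rather than $n$. Specifically, define
\begin{equation*}
f_j(z) \;=\; \prod_{l\in\{1,\ldots,n\}\setminus j} \left[1 + \frac{\alpha_l}{\beta_l} + z\,\frac{\alpha_l u_l}{\beta_l v_l}\right].
\end{equation*}
Expanding the product and identifying the coefficient of $z^b$ exactly reproduces the sum defining $c_{j,b}$, so $f_j(z) = \sum_{b=0}^{n-1} c_{j,b}\, z^b$. Sampling $f_j$ at the $n$-th roots of unity $e^{-2\pi i t/n}$ for $t\in\{0,\ldots,n-1\}$ and applying the inverse discrete Fourier transform (on $n$ points, since the polynomial has degree $n-1$) gives formula \eqref{eq:c-j-b}. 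Finally, the $\rho\to 0$ specialization drops the sum over $x$ (only $x=0$ survives), recovering the stated formula with $\tilde B$ replacing $B$.

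The only subtle step is keeping the index bookkeeping straight: the relevant polynomial has degree $n-1$, hence the DFT uses $n$ points with frequencies $2\pi t/n$ (as opposed to $n+1$ points in the previous lemma), and the outer sum over $b$ runs up to $\min[n-1,m]$ rather than $\min[n,m]$. Everything else is a mechanical repetition of the argument used to compute $B$.
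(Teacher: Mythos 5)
Your proposal is correct and follows essentially the same route as the paper: sum the joint distribution of Corollary \ref{cor:cor-2} over states with $a_j=I$ to obtain the leave-one-out coefficients $c_{j,b}$, then recover them via the degree-$(n-1)$ generating polynomial $f_j(z)$ sampled at the $n$-th roots of unity. The index bookkeeping you flag (upper limit $\min[n-1,m]$, an $n$-point DFT rather than $n+1$) matches the paper exactly.
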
 

\begin{proof} 
Fix $j \in \{1, \ldots, n\}$. Let $\hat{a}_j=(a_1, \ldots, a_{j-1}, a_{j+1}, \ldots, a_n)$. 
By summing probabilities in \eqref{eq:expression3} we obtain: 
\begin{align*}
P[I_j] &= B\sum_{b=0}^{\min[n-1,m]} \sum_{x=0}^{m-b}\left(\prod_{r=0}^{x+b-1}\theta(r)\right)\frac{\rho^x}{x!}\sum_{\hat{a}_j:\busy_p(\hat{a}_j)=b}\prod_{l \in \{1, ..., n\} \setminus j} \left(\frac{\alpha_l}{\beta_l}\right)^{1_{\{a_l=W\}}}\left(\frac{\alpha_lu_l}{\beta_lv_l}\right)^{1_{\{a_l=T\}}}\\
&=B \sum_{b=0}^{\min[n-1,m]}\sum_{x=0}^{m-b} \left(\prod_{r=0}^{x+b-1}\theta(r)\right) \frac{\rho^x}{x!} c_{j,b}
\end{align*}
where $c_{j,b}$ is defined 
$$ c_{j,b} = \sum_{\hat{a}_j:\busy_p(\hat{a}_j)=b}\prod_{l \in \{1, ..., n\} \setminus j} \left(\frac{\alpha_l}{\beta_l}\right)^{1_{\{a_l=W\}}}\left(\frac{\alpha_lu_l}{\beta_lv_l}\right)^{1_{\{a_l=T\}}} \quad \forall b \in \{0, 1, \ldots, n-1\}  $$
This shows that $P[I_j]$ has the form \eqref{eq:I-j}, although it remains to show these $c_{j,b}$ values are equal to \eqref{eq:c-j-b}. 
Define the polynomial $f_j(z)$ by 
$$ f_j(z) = \prod_{l \in \{1, \ldots, n\} \setminus j} \left[1 + \left(\frac{\alpha_l}{\beta_l}\right) + z\left(\frac{\alpha_lu_l}{\beta_lv_l}\right)\right] = \sum_{b=0}^{n-1}c_{j,b}z^b$$
With $i = \sqrt{-1}$ define 
$$C_{j,t} = f_j(e^{\frac{-2\pi i t}{n}}) =\prod_{l \in \{1, \ldots, n\} \setminus j} \left[1 + \left(\frac{\alpha_l}{\beta_l}\right) + e^{\frac{-2\pi i t}{n}}\left(\frac{\alpha_lu_l}{\beta_lv_l}\right)\right]  \quad \forall t \in \{0, 1, \ldots, n-1\}$$
For this persistent user $j \in \{1, \ldots, n\}$, the sequence $\{C_{j,t}\}_{t=0}^{n-1}$ is the discrete Fourier transform of $\{c_{j,b}\}_{b=0}^{n-1}$. 
By the inverse transform we obtain: 
\begin{equation*} 
c_{j,b} = \frac{1}{n} \sum_{t=0}^{n-1}\underbrace{\prod_{l \in \{1, \ldots, n\} \setminus j} \left[1 + \left(\frac{\alpha_l}{\beta_l}\right) + e^{\frac{-2\pi i t}{n}}\left(\frac{\alpha_lu_l}{\beta_lv_l}\right)\right]}_{C_{j,t}}    e^{\frac{2\pi i t b}{n}} \quad \forall b \in \{0, 1, \ldots, \min[n-1,m]\}
\end{equation*} 
\end{proof}

\subsection{Performance for non-persistent users} 

Recall that non-persistent users arrive according to independent Poisson arrival processes. Since Poisson arrivals see time averages (PASTA, see, for example, \cite{bertsekas-data-nets}) it holds that the fraction of time that non-persistent users of class $i \in \{1, \ldots, k\}$ see a system with $y$ busy channels is the same for all classes $i$ and is equal to the long term 
fraction of time that there are $y$ busy channels.  Hence, all non-persistent users see the same access success probability, call it $\phi_0$: 
$$ \phi_0 = \sum_{y=0}^m P[\mbox{Exactly $y$ busy channels}]\theta(y)$$
where the probability of having exactly $y$ busy channels in the right-hand-side above is the steady state value and can be found by summing the $q()$ probabilities in \eqref{eq:expression3} over all states that yield $y$ busy channels: 
\begin{align*}
P[\mbox{Exactly $y$ busy channels}] &= \sum_{b=0}^{\min[y,n]}\sum_{a:\busy_p(a)=b} q(y-b;a_1, \ldots, a_n)\\
&=B\sum_{b=0}^{\min[y,n]} \frac{{\rho}^{y-b}}{(y-b)!} \left(\prod_{r=0}^{y-1}\theta(r)\right)c_b
\end{align*}
where $c_b$ is defined in \eqref{eq:cb} and is computed via \eqref{eq:c-b-calc};  $B$ is defined in \eqref{eq:B}.  Hence
$$ \phi_0 = B \sum_{y=0}^m\sum_{b=0}^{\min[y,n]} \frac{{\rho}^{y-b}}{(y-b)!} \left(\prod_{r=0}^{y}\theta(r)\right)c_b $$

\section{Validation on test cases}

This section validates the results of the previous section (obtained by the discrete Fourier transform) by considering simple example cases and comparing to simulated performance. 

\subsection{Three identical persistent users} 

Consider the following test case with $m=5$ channels and where each user scans a subset $s=2$ of these channels.  
There is one class of non-persistent traffic with $\lambda_1 = 1$ and $\mu_1=2$, so that $\rho = 1/2$. There are three non-persistent users with identical parameters $\alpha_j=\alpha$, $\beta_j=\beta$, $u_j=u$, $v_j=v$ for all $j \in \{1, 2, 3\}$ with 
$$ \alpha = 1; \beta = 1; u = 5; v=10$$
We compare the exact 
success probabilities $\phi_0, \phi_1, \phi_2, \phi_3$ and the persistent 
user state probabilities $P[I_j]$, $P[W_j]$, $P[T_j]$ obtained by the formulas in the previous section with simulation values. The exact values were calculated using complex number multiplication in matlab (the imaginary parts of all real-valued quantities were indeed found to be zero in the matlab computation).  The simulation was conducted in matlab over a period of $10^7$ transitions of the CTMC.  Access attempts that fail were also counted as transitions, even though these transitions did not change the state of the CTMC.  The success probability ratios were obtained by taking the ratio of accepted attempts to total attempts for each type of user.  To obtain a fast simulation, only discrete events were simulated. The exponentially distributed times between transitions was not simulated. Instead,  the time spent in each state was updated immediately when the simulation entered that state by adding the theoretical average time spent in that state
(equal to the reciprocal of the sum of all possible transition rates out of that state).  The resulting simulated fraction of time being in a particular state is then the sum of all average times computed while in this state divided by the sum of all average times computed.    This simulation technique is similar in spirit to the standard CTMC simulation technique of \emph{uniformization} (see, for example, \cite{ross-prob}). However, in this context, it is easier to implement
because it allows each state to have a different sum of outgoing transition rates, and hence does not require adding  
faux self-transitions to impose uniformization. 

Tables \ref{tab:table1}-\ref{tab:table2} report the results.  The data from these tables shows a good agreement between theory and simulation. 

\begin{table}[htbp]
   \centering
   \begin{tabular}{|c|c|c|c|} 
   \hline
    Type &  Success prob & Simulated    & Exact    \\ \hline
      Non-persistent & $\phi_0$&   0.9530   & 0.9527 \\
      Persistent & $\phi_1$& 0.9676 & 0.9674    \\
   Persistent &   $\phi_2$ &  0.9675 & 0.9674   \\
  Persistent & $\phi_3$ &  0.9670    & 0.9674 \\ \hline
   \end{tabular}
   \caption{A comparison of simulated and exact values for success probability.}
   \label{tab:table1}
\end{table}

\begin{table}[htbp]
   \centering
   \begin{tabular}{|c|c|c|c|} 
   \hline
  Persistent Users & $P[I_j]$ & $P[W_j]$ & $P[T_j]$ \\ \hline
     User 1 Simulation &    0.4028  &    0.4024   & 0.1948 \\ \hline
     User 2 Simulation &  0.4021  &   0.4030  &   0.1949\\ \hline
     User 3 Simulation & 0.4035  &  0.4019    & 0.1946\\ \hline\hline
     Exact   &  0.4026 & 0.4026 & 0.1947 \\ \hline
   \end{tabular}
   \caption{A comparison of simulated and exact values for state probabilities.}
   \label{tab:table2}
\end{table}

\subsection{Two classes of persistent users (three in each class)} 

This test case considers $m=10$ channels and $s=2$. 
This test case considers two classes of persistent users (as defined by the $\alpha_j,\beta_j, u_j,v_j$ parameters)
with three users in each class: 
\begin{itemize} 
\item Persistent class A:  $\alpha = 1; \beta = 1; u = 5; v=10$
\item Persistent class B: $\alpha = 1; \beta=1; u= 5; v=1$
\end{itemize} 
In particular, class A persistent users have the same parameters as the previous subsection, while class B persistent
users have file that take 10 times longer to serve.    There is a single non-persistent class with parameters $\lambda_1=1, \mu_1=1$ (so that $\rho =1$).  

As before, we compare the exact theoretical values from the previous section with simulated values obtained over a simulation 
with $10^7$ transitions. 
The results are shown in Tables \ref{tab:table3}-\ref{tab:table4}. Again there is good agreement between
theory and simulation.  
Notice that class B persistent users spend much more time in the transmitting state.  
These users also have a slightly higher access success probability. Intuitively, this is because the average time
that a  class B users spends transmitting a file over a channel is 10 times longer than class A users and non-persistent users. 
Thus,  when a class B user wants to access a channel, 
this particular class B user is certainly \emph{not currently occupying a channel}, which is one less channel-hogger
to worry about. 

\begin{table}[htbp]
   \centering
   \begin{tabular}{|c|c|c|c|} 
   \hline
    Type &  Success prob & Simulated    & Exact    \\ \hline
      Non-persistent & $\phi_0$&    0.8819  & 0.8822 \\ \hline
      Persistent A & $\phi_1$& 0.8940 &   0.8937  \\
   Persistent A&   $\phi_2$ & 0.8938 & 0.8937   \\
  Persistent A& $\phi_3$ &    0.8939  & 0.8937 \\ \hline
    Persistent B& $\phi_4$ &  0.9213    & 0.9209 \\ 
  Persistent B& $\phi_5$ &    0.9206  & 0.9209 \\ 
  Persistent B& $\phi_6$ &    0.9207  & 0.9209 \\ \hline
   \end{tabular}
   \caption{A comparison of simulated and exact values for success probability.}
   \label{tab:table3}
\end{table}

\begin{table}[htbp]
   \centering
   \begin{tabular}{|c|c|c|c| |c|c|c|c|} 
   \hline
  Persistent Class A & $P[I_j]$ & $P[W_j]$ & $P[T_j]$ & Persistent Class B & $P[I_j]$ & $P[W_j]$ &$P[T_j]$\\ \hline
     User 1 Simulation &  0.4086   & 0.4090  &  0.1823      &User 4 Simulation & 0.1507 &   0.1509 &    0.6984\\ \hline
     User 2 Simulation &   0.4094   &  0.4084  &   0.1822     & User 5 Simulation & 0.1533 &   0.1512    &0.6955 \\ \hline
     User 3 Simulation &  0.4093  &   0.4082 &     0.1825&    User 6 Simulation & 0.1515 &    0.1510  &  0.6976  \\ \hline\hline
     Exact   &  0.4087& 0.4087 &  0.1826 & Exact &0.1514 &0.1514 & 0.6972\\ \hline
   \end{tabular}
   \caption{A comparison of state probabilities for class A and B persistent users.}
   \label{tab:table4}
\end{table}

\section{Conclusion} 

This paper considers a multi-channel multi-access system with $m$ identical channels.  When a user wants to send a new file, it scans only a subset of $s$ channels to see if one is idle.  Scanning fewer than $m$ channels can significantly reduce complexity and energy expenditure.  When all users are non-persistent and send at most one file, 
a simple expression for success probability was derived that depends only on $m$, $s$, and the system loading $\rho=\sum_{i=1}^k \lambda_i/\mu_i$.  This allows for an arbitrarily large number of non-persistent user classes with arbitrary $\lambda_i$ and $\mu_i$ values for each class, provided that the sum of the $\lambda_i/\mu_i$ values is $\rho$. 
It was shown that when $\rho<m/2$ then success probability is very large (near 1) and gets larger when the number of channels $m$ is increased while maintaining the same ratio  
$\rho/m$. 

The case when both persistent and non-persistent users was also analyzed. Each persistent user has its own \emph{activity parameters} and 
behaves according to a 3-state process with idle, waiting, and transmitting states.   The exact steady state values were also derived in this setting. 
This is a more complex scenario and it is not obvious how to obtain individual performance parameters from 
the joint steady state mass function (which an have more terms than the number of atoms in the universe).   An efficient method for obtaining
the individual performance values was developed using a discrete Fourier transform.

\bibliographystyle{unsrt}
\bibliography{../../../../latex-mit/bibliography/refs}
\end{document}